\newtheorem{lem}{Lemma}
\newtheorem{rem}{Remark}
\newtheorem{thm}{Theorem}
\newtheorem{prop}{Proposition}
\newcommand{\Pout}{\mathbb{P}}
\newcommand{\SNR}{{\rm{SNR}}}
\newcommand{\SINR}{{\rm{SINR}}}
\DeclareMathOperator*{\st}{s.t.}
\definecolor{orange}{RGB}{255,107,0}
\definecolor{green}{RGB}{0,160,20}
\newcommand{\kset} {\ensuremath{\mathcal{K}}}
\newcommand{\CN} {\ensuremath{\mathcal{CN}}}
\begin{document}
\title{HARQ-CC Enabled NOMA Designs With Outage Probability Constraints}
	\date\today
	\author{Yanqing Xu, Donghong Cai, Fang Fang, Zhiguo Ding, Chao Shen, and Gang Zhu\\
		\thanks{
			Part of this work has been accepted by IEEE Global Communications Conference (GlobeCom), 2018, \cite{xu-globecom-2018}.
			
			Yanqing Xu, Chao Shen and Gang Zhu are with the State Key Laboratory of Rail Traffic Control and Safety, Beijing Jiaotong University, Beijing, China. (e-mail: \{xuyanqing, chaoshen, gzhu\}@bjtu.edu.cn).
			Donghong Cai is with the Institute of Mobile Communications, Southwest Jiaotong University, Chengdu, China. (e-mail: cdhswjtu@163.com).
		   Fang Fang and Zhiguo Ding are with the School of Electrical and Electronic Engineering, The University of Manchester, UK. (e-mail: \{fang.fang, zhiguo.ding\}@manchester.ac.uk).
		
		   }
}
	\maketitle
%
	
\begin{abstract}
In this paper, we aim to design an adaptive power allocation scheme to minimize the average transmit power of a hybrid automatic repeat request with chase combining (HARQ-CC) enabled non-orthogonal multiple access (NOMA) system under strict outage constraints of users. Specifically, we assume the base station only knows the statistical channel state information of the users.
We first focus on the two-user cases. To evaluate the performance of the two-user HARQ-CC enabled NOMA systems, we first analyze the outage probability of each user. Then, an average power minimization problem is formulated. However, the attained expressions of the outage probabilities are nonconvex, and thus make the problem difficult to solve. Thus, we first conservatively approximate it by a tractable one and then use a successive convex approximation based algorithm to handle the relaxed problem iteratively. For more practical applications, we also investigate the HARQ-CC enabled transmissions in multi-user scenarios. The user-paring and power allocation problem is considered.  With the aid of matching theory, a low complexity algorithm is presented to first handle the user-paring problem. Then the power allocation problem is solved by the proposed SCA-based algorithm. Simulation results show the efficiency of the proposed transmission strategy and the near-optimality of the proposed algorithms.
\end{abstract}
\begin{IEEEkeywords}
HARQ-CC, NOMA, power allocation, successive convex approximation, matching theory
	\end{IEEEkeywords}

\section{Introduction}

In wireless communication systems, due to the randomness of the wireless channels, deep fading may occur from time to time, which would bring deteriorating effects on the communication reliability.
To combat this, the hybrid automatic repeat request (HARQ), a powerful time diversity technique, has been incorporated into practical wireless communication systems, such as WiMAX and LTE  \cite{3GPP-HARQ}, to improve the communication reliability.
Moreover, the HARQ technique is also viewed as an important tool to guarantee the communication reliability of the mission critical service in 5G and its beyond \cite{3GPP-URLLC2}.
Generally, the HARQ techniques can be implemented by two basic schemes, i.e., the HARQ with chase combining (HARQ-CC) and HARQ with incremental redundancy (HARQ-IR).
In the HARQ-CC scheme \cite{Chase1973}, the packets delivered in different retransmission rounds are exactly the same and the receiver uses the maximum-ratio combining (MRC) to decode the packet; while in the HARQ-IR scheme \cite{IR1974}, additional parity bits are sent to the receiver during each retransmission round and the receiver uses code combining to decode the packet.
These two HARQ schemes have been widely used in the system design to against the channel uncertainty, e.g.,  \cite{su-tcom-2011,larsson-tcomm-2013,Chaitanya2011,Dechene2014,Szczecinski-2013}.
In particular, \cite{su-tcom-2011} considered the power allocation problem with the assumption that only the statistical channel state information (CSI) of the user is available at the base station (BS), and the channel follows the Rayleigh fading and keeps invariant during retransmission rounds. Then \cite{larsson-tcomm-2013} extended the work of \cite{su-tcom-2011} to the case that the channels in different retransmission rounds are with independent and identically distribution. However,
due to the challenge to analysis the performance of HARQ enabled transmissions, most existing works focus only on the point-to-point systems.

Recently,  non-orthogonal multiple access (NOMA) has been intensively studied for its high spectral efficiency \cite{ding-jsac-2017,XU-TSP-2017,Ding2014}. Compared to its counterpart, orthogonal multiple access (OMA), NOMA has two key characteristics. The first is that NOMA allows the transmitter to serve multiple users at the same resource block
simultaneously; the second is that the successive interference cancellation (SIC) technique can be exploited by the ``strong user'' to remove the interference from the ``weak user''. Consequently, the system spectral efficiency can be increased \cite{tse-2005}.
To investigate the impact of imperfect CSI on the performance of NOMA systems, by assuming that the BS only knows the statistical CSI, \cite{cui-spl-2016} investigated the power allocation problem to maximize the data rate of the system under outage constraints. To benefit the advantages of HARQ and NOMA, the HARQ technique was recently applied to the downlink NOMA systems in \cite{xu-globecom-2018-2,li2015,choi-tcom-2016,cai2018,cai2018-harq,cai-globecom-2018}.
 Specifically, a partial HARQ-CC scheme was proposed in \cite{xu-globecom-2018-2} to solve the problems in the cases that the users have different quality of service (QoS) requirements. \cite{li2015} investigated the impact of transmit power allocation on the performance of a two-user NOMA system with opportunistic HARQ. \cite{choi-tcom-2016} considered the rate selection problem under the HARQ-IR strategy to minimize the retransmission rounds. \cite{cai2018,cai2018-harq,cai-globecom-2018} analyzed the outage performance of the HARQ-CC enabled NOMA systems. However, the former works didn't treat the outage performance of users well and the multi-user cases are not considered.

In this paper, we consider the adaptive power allocation problem in a HARQ-CC aided NOMA system to minimize the average transmit power under outage constraints of users.
Here the HARQ-CC strategy is adopted in our system mainly due to its low complexity.
Firstly, we focus on the two-user case.
Different from \cite{choi-tcom-2016}, where the outage probabilities of the two users were approximated based on the large deviation theory, we tightly approximate the signal-to-interference-plus-noise-ratio (SINR) outage probabilities of users, and the tightness can be guaranteed even in the case with a few retransmission rounds.
This enables our results applicable to the use cases of 5G and its beyond that are specified by low latency requirements \cite{Andrews-2014,JSAC2017-XU}.
Then adaptive power allocation is investigated to minimize the total average transmit power under the constraints of the outage probability. However, the derived outage probabilities bring great challenges to the optimal system design.
Then by leveraging the successive convex approximation method, the problem is well treated.
In view of the multi-user scenario of most practical applications, we further study the joint user paring and power allocation problem, which is a mixed integer programming problem and thus more challenging than the one of the two-user case. The contributions of this paper are summarized as follows.
\begin{itemize}
	\item By defining the SINR outage probability of the HARQ-CC enabled NOMA system, we formulate the total average transmit power minimization problem of the two-user case. Then by carefully deriving the outage probability, we reformulated optimization problem into an explicit one. Moreover, the derived outage probability show good tightness even in the case with low signal-to noise-ratio (SNR) and small number of retransmissions. Also, based on the derived outage probabilities, the asymptotic performance of SINR outage probability is investigated.
	Then by properly approximating the outage probabilities, we propose a successive approximation based algorithm to iteratively solve the problem. And the proposed algorithm can be guaranteed to converge to at least a stationary point of the problem.
	Simulation results show that the proposed approximation approach and successive approximation based algorithm can achieve the near-optimality performance. The proposed adaptive power allocation scheme can significantly outperform the equal power allocation scheme.
	
	\item The average power minimization problem of joint user paring and power allocation in the multi-user cases is formulated. With the aid of matching theory, a low computational complexity algorithm is proposed to solve the user-paring problem first. The matching theory based algorithm is proved to converge in several iterations and the complexity is drastically decreased compared with the optimal exhaustive search method. Then we show that the power allocation problem can be well solved by the successive approximation based algorithm proposed for the two-user case.
	Simulation results show that the matching algorithm can achieve good performance compared with the optimal one by the exhaustive search.
\end{itemize}

{\bf Synopsis:} Section \ref{sec:system_model} presents the adaptive power allocation design in a two-user case. In Section \ref{sec:outage_analysis}, the outage probabilities of users are analyzed and the asymptotic performance is investigated; a successive approximation based algorithm is proposed in Section \ref{sec:power_min} to solve the adaptive power allocation problems. Section \ref{sec:multi-user} considers the joint user paring and power allocation problems in a multi-user case. The simulation results and conclusions are given by Section \ref{sec:simulation} and Section \ref{sec:conclusion}, respectively.

\section{Adaptive Power Allocation for Two-User Case} \label{sec:system_model}
 \subsection{System Model and Problem Formulation}
We first consider a downlink HARQ-CC-NOMA system in a two-user case. Then we will study the multi-user case in the next section and show how the insights obtained in the two-user case can be applied to solve the problems of multi-user case. In the considered two-user system,
a single-antenna base station (BS) serves two single-antenna users, as shown in Fig. \ref{fig:sys_model1}.
Here, we assume that the BS only knows the statistical channel knowledge, i.e., the distances of users to the BS and the distribution of small-scale fading. This assumption is practical for mission critical service or other real-time applications as less CSI feedback is needed.
The users are ordered based on their average channel gain. Without loss of generality, we assume that user 1 is a ``weak user'' that is located far from the BS and  user 2 is a ``strong user'' which is close to the BS.
Due to the limited CSI at the BS, the communication reliability will be degraded. Therefore, the HARQ-CC is applied in our system, and  the number of transmission rounds is limited to $T$ to guarantee the latency requirement.

\begin{figure}[!t]
	\centering
	\includegraphics[width=0.5\linewidth]{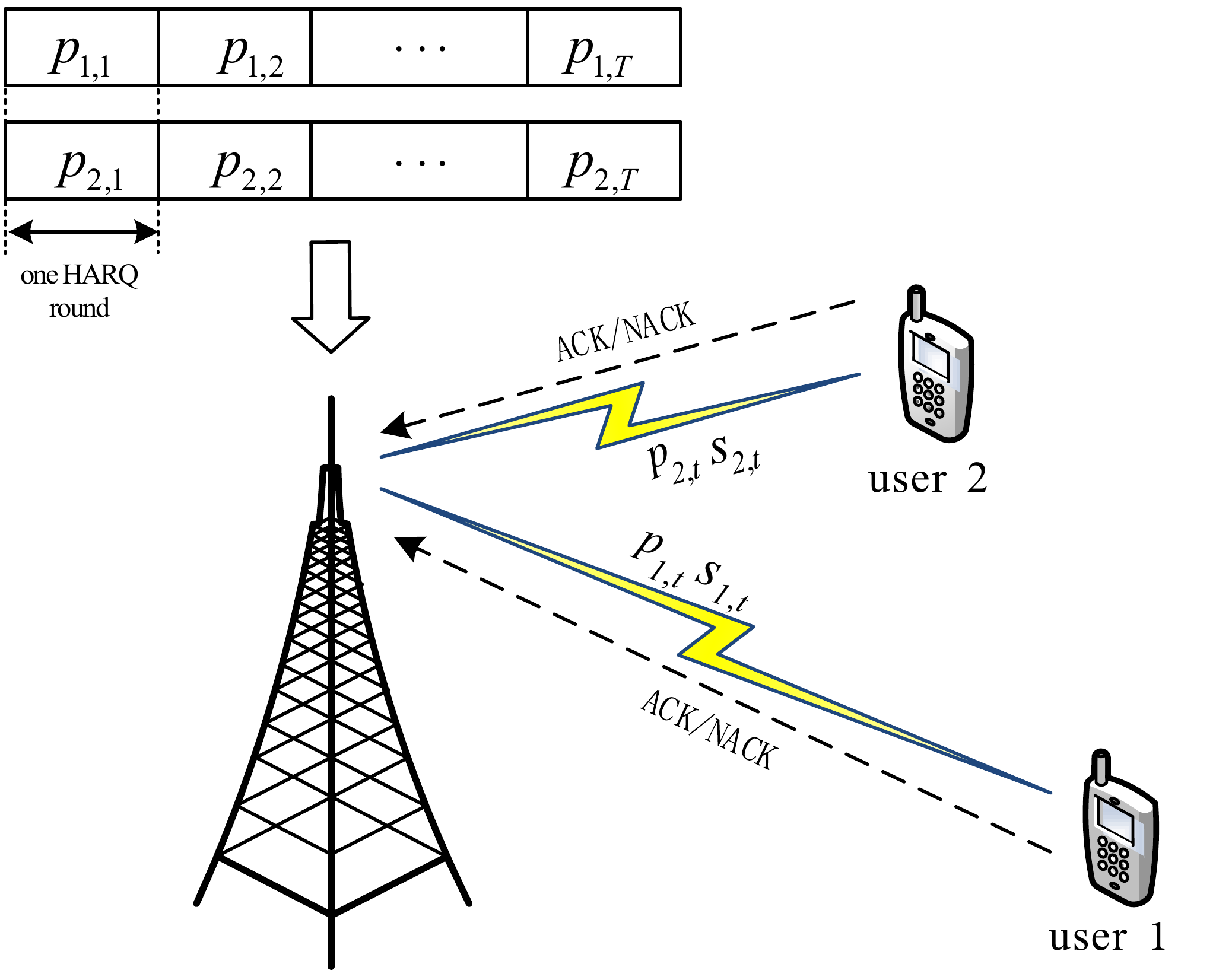}\\
\caption{An illustration of a two-user HARQ-CC enabled NOMA system.} \label{fig:sys_model1} 
\end{figure}

Under the HARQ-CC protocol, the BS keeps transmit the same packet, and the users combine the received signals during different ARQ rounds with MRC.
If a user can successfully decode its information, then an acknowledgement  (ACK) signal will be sent to the BS, where  we assume the feedback channel is error-free. Otherwise, the user will return a negative acknowledgement (NACK) signal. If the BS receives a NACK, then retransmission should be carried out until the maximum number of retransmissions is reached.
As a result, the packets will be successfully delivered or dropped after $T$ transmission rounds.
Further, we define that the transmission is successful if and only if both users return ACKs.

At the transmitter, the packets of user $k$ are independently encoded with target SNR $\gamma_k$, for $k = 1,2$. At the $t$-th HARQ-CC round, the BS combines the information of the two users by using the superposition coding strategy as $s_{t} = \sqrt{p_{1,t}} s_{1,t} + \sqrt{p_{2,t}} s_{2,t}$ where $s_{k,t}$ is the unit-power symbol and  $p_{k,t}$ is the transmit power, for $k = 1,2$. The received signal at user $k$ during the $t$-th HARQ-CC round is given by
\begin{align}
y_{k,t} = \frac{\tilde{h}_{k,t}}{\sqrt{1+d_k^{\alpha}}} (\sqrt{p_{1,t}} s_{1,t}+ \sqrt{p_{2,t}} s_{2,t}) + n_{k,t},  ~ k = 1,2,
\end{align}
where $\tilde{h}_{k,t} \sim \CN(0,1)$ models the small-scale Rayleigh fading of user $k$ during the $t$-th HARQ-CC round where $\CN(\cdot,\cdot)$ denote the complex Gaussian distribution,	
$d_k$ is the distance of user $k$ to the BS, $\alpha$ is the path loss component,
and $n_{k,t} \sim \mathcal{CN}(0,\sigma_{k}^2)$ is the additive white Gaussian noise (AWGN) at user $k$. 

According to the NOMA principle, user 2 performs SIC to first remove the interference from user 1 and then decodes its own information.
So the SINR and SNR for decoding $x_1$ and $x_2$ at the $t$-th round at user 2 are respectively given by
\begin{subequations}
	\begin{align}
	\SINR_{2,t}^{x_1} &= \frac{p_{1,t}h_{2,t}\lambda_2}{p_{2,t}h_{2,t}\lambda_2 + 1}, \\
	\SNR_{2,t}^{x_2}  &= p_{2,t} h_{2,t}\lambda_2,
	\end{align}
\end{subequations}
where $h_{2,t} \lambda_2$  denotes the normalized channel power gain from the BS to user 2 with $h_{2,t} = |\tilde{h}_{2,t}|^2$ and $\lambda_2 = \frac{1}{(1+d_2^{\alpha})\sigma_2^2}$. Meanwhile, for user 1, it decodes its own information directly by treating the signal for user 2 as noise, thus the SINR to decode $x_1$ at the $t$-th round at user 1 can be described as
\begin{align}
\SINR_{1,t}^{x_1} = \frac{p_{1,t}h_{1,t}\lambda_1}{p_{2,t}h_{1,t}\lambda_1 + 1},
\end{align}
where $h_{1,t} = |\tilde{h}_{1,t}|^2$ and $\lambda_1 = \frac{1}{(1+d_1^{\alpha})\sigma_1^2}$.
Here we define the outage event after $t$ transmission rounds as that the combined SINR is smaller than the target SNR. Thus, up to the $t$-th HARQ-CC round, the users' outage probabilities can be respectively defined by
\begin{subequations} \label{eq:outage_users}
	\begin{align}
	\Pout_{1,t} &= \Pr\left(\sum_{\ell=1}^{t}\SINR_{1,\ell}^{x_1} < \gamma_1\right),   \label{eq:outage_1}\\
	\Pout_{2,t} &= 1 \!-\! \Pr\left(\sum_{\ell=1}^{t}\SINR_{2,\ell}^{x_1} \ge \gamma_1, \sum_{\ell=1}^{t}\SNR_{2,\ell}^{x_2} \!\ge\! \gamma_2\right).   \label{eq:outage_2}
	\end{align}
\end{subequations}
To guarantee the communication reliability requirements of both users, the outage probabilities after $T$ transmission rounds should satisfy
\begin{align}
\Pout_{k,T} \le \delta_k, ~k = 1,2,
\end{align}
where $\delta_k$ is the maximum tolerable outage probability for user $k$.
Recall that the transmission is successful if and only if both users can successfully decode their packets. Thus the retransmission probability at the $t$-th HARQ-CC round is determined by the outage probabilities of the two users in the $(t-1)$-th HARQ-CC round. Specifically, the retransmission probability at the $t$-th HARQ-CC round is given by
\begin{subequations}
	\begin{align}\label{eq:retransmission_probability}
	{\hat{\Pout}_t} &= 1 - (1 - \Pout_{1,t-1}) (1 - \Pout_{2,t-1}) \\
	&= \Pout_{1,t-1} + \Pout_{2,t-1} - \Pout_{1,t-1} \Pout_{2,t-1}, ~\forall t = 1,...,T,
	\end{align}
\end{subequations}
where $\Pout_{1,0} = \Pout_{2,0} = 1$.

Define the average transmit power at $t$-th HARQ-CC round as $P_{{\rm avg},t} \triangleq (p_{1,t} + p_{2,t})\hat{\Pout}_t $.
Thus the total average transmit power of the system after $T$ transmissions can be written as
\begin{subequations}\label{eq:averge_power_2}
	\begin{align}
	{{P_{\rm avg} }}~& = \sum_{t = 1}^{T} P_{{\rm avg},t} = \sum_{t = 1}^{T} (p_{1,t}+p_{2,t})\hat{\Pout}_t \\
	&= p_{1,1} + p_{2,1} + \sum_{t = 2}^{T} (p_{1,t}+p_{2,t})\hat{\Pout}_t.
	\end{align}
\end{subequations}

We aim to minimize the total average power while guaranteeing the QoS requirements of the two users. Therefore, the outage-constrained HARQ-CC transmission design can be formulated as
\begin{subequations}\label{p:power_min_ARQ}
	\begin{align}
	\min_{{\bf p_1}, {\bf p_2}} \quad &  {{p_{1,1} + p_{1,2} + \sum_{t = 2}^{T} (p_{1,t}+p_{2,t})\hat{\Pout}_{t}}} \\
	\st ~~~& \Pout_{k, T}\le \delta_k,~k =1,2,\label{eq:outage_constraint}\\
	& p_{1,t} \ge 0,~ p_{2,t} \ge 0,  \forall t= 1,...,T,\label{eq:power_allocation_1}\\
	& p_{1,t} + p_{2,t} \le P_{\max}, \forall t=1,...,T,\label{eq:power_allocation_2}
	\end{align}
\end{subequations}
where ${\bf p_1} = [p_{1,1},...,p_{1,T}]^T$, ${\bf p_2} = [p_{2,1},...,p_{2,T}]^T$, and $P_{\max}$ is the maximum transmit power. \eqref{eq:outage_constraint} are the outage constraints.
Note that the main challenge in solving problem \eqref{p:power_min_ARQ} is due to the fact that the outage probabilities of users do not have closed-form expressions. Thus, in the next subsection, we first to explicitly characterize the two outage probabilities, $\Pout_{k,T}$, for $k = 1,2$.

\subsection{Outage Probability Analysis} \label{sec:outage_analysis}
Let's first focus on the outage probability of user 1 given in \eqref{eq:outage_1}.
\subsubsection{Outage Probability of User 1}
The outage probability of user 1 is given in the following theorem.
\begin{thm}
	After $T$ transmissions, the outage probability of user 1 is given by.
	\begin{align} \label{eq:ana_outage_user1}
	\Pout_{1,T} &\approx 2^{T+1}\ln 2\left(\frac{\pi}{N}\right)^{T+1}\left(\prod_{t=1}^{T}\beta_tp_{1,t}\right)e^{\lambda_1^{-2}\sum\limits_{t = 1}^{T}p_{2,t}^{-1}}\Bigg[\sum_{n\in\mathcal{N}}\left(\prod_{t=1}^{T}\mathcal{C}_t(a_n)\right)\nonumber \\
	& ~~~~~~~~~~~~~~~~~~~~~~~~~~~~~~~~~~~~~~~~~~~~~~\times \sum_{m=1}^{M}\sum_{k=1}^{N} \left({\frac{w_m\sqrt{1-a_{k}^2}}{\gamma_1(1+a_k)} } 2^{-\frac{m(1+a_n)}{\gamma_1(1+a_{k})}\sum_{t=1}^{T}\beta_t}\right)\Bigg],
	\end{align}
	where $a_{n}=\cos\left(\frac{2n-1}{2N}\pi\right)$ and  $a_{k}=\cos\left(\frac{2k-1}{2N}\pi\right)$ for $n, k = 1,...,N$ and $N$ (e.g. $N=30$) is a parameter for Gaussian-Chebyshev quadrature, $\beta_t = \frac{p_{1,t}}{p_{2,t}}$, $\mathcal{C}_t(x)$ is defined as
	\begin{align}
	\mathcal{C}_t(x)=\frac{\sqrt{1-x^2}}{\lambda_1(2p_{1,t}-p_{1,t}(x+1))^2}e^{-\frac{2p_{1,t}}{p_{2,t}(2p_{1,t}-p_{1,t}(x+1))\lambda_1}},
	\end{align}
	 and $w_m$ is given by
	\begin{align}
		w_m = (-1)^{\frac{M}{2} + m}  \sum_{n = \left\lfloor \frac{(m+1)}{2} \right\rfloor}^{\min\{m,\frac{M}{2}\}}\frac{  n^{\frac{M}{2}}(2n)!}{m! \left(\frac{M}{2}-n\right)!n!(n-1)!(m-n)!(2n-m)!}.
	\end{align}
\end{thm}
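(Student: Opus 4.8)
The plan is to read $\Pout_{1,T}$ as the cumulative distribution function (CDF) of the sum $S_T:=\sum_{\ell=1}^{T}\SINR_{1,\ell}^{x_1}$ evaluated at $\gamma_1$, and to compute it through a two-layer numerical scheme: a Gaver--Stehfest inverse Laplace transform to resolve the $T$-fold convolution, wrapped inside and outside by Gaussian--Chebyshev quadratures for the integrals that have no closed form. I would start with the law of a single round. Since $h_{1,\ell}=|\tilde h_{1,\ell}|^2$ is unit-exponential and $\SINR_{1,\ell}^{x_1}=\frac{p_{1,\ell}\lambda_1 h_{1,\ell}}{p_{2,\ell}\lambda_1 h_{1,\ell}+1}$ is a monotone, saturating map of $h_{1,\ell}$ with range $[0,\beta_\ell)$ (the saturation explains the bounded support), inverting $x\mapsto h_{1,\ell}$ and differentiating the exponential CDF gives the per-round density
\begin{align}
f_\ell(x)=\frac{p_{1,\ell}}{\lambda_1\,(p_{1,\ell}-x\,p_{2,\ell})^{2}}\,\exp\!\left(-\frac{x}{\lambda_1(p_{1,\ell}-x\,p_{2,\ell})}\right),\quad 0\le x<\beta_\ell .
\end{align}

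Next I would attack the sum. Because the $\SINR_{1,\ell}^{x_1}$ are independent across rounds, the Laplace transform of the density of $S_T$ factorizes as $\Phi(u)=\prod_{\ell=1}^{T}\phi_\ell(u)$ with $\phi_\ell(u)=\mathbb{E}\big[e^{-u\,\SINR_{1,\ell}^{x_1}}\big]=\int_0^{\beta_\ell}e^{-ux}f_\ell(x)\,dx$; the density of the sum is then recovered by inverse transform and the CDF by one further integration. I would evaluate $\Pout_{1,T}=\int_0^{\gamma_1}f_{S_T}(s)\,ds$ in two moves: (i) map $[0,\gamma_1]$ onto $[-1,1]$ by $s=\tfrac{\gamma_1}{2}(1+a)$ and apply Gaussian--Chebyshev quadrature in the node $a_k$, which supplies the nodes $s_k=\tfrac{\gamma_1}{2}(1+a_k)$, the weights $\tfrac{\pi}{N}\sqrt{1-a_k^2}$, and the factor $\tfrac{1}{1+a_k}$; and (ii) write each $f_{S_T}(s_k)$ via the Gaver--Stehfest formula $f_{S_T}(s_k)\approx\frac{\ln 2}{s_k}\sum_{m=1}^{M}w_m\,\Phi\!\left(\frac{m\ln 2}{s_k}\right)$, which is exactly where the weights $w_m$ enter and, since $\tfrac{m\ln2}{s_k}=\tfrac{2m\ln 2}{\gamma_1(1+a_k)}$, where the base-$2$ exponentials originate.

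The remaining layer handles the $T$ transform integrals $\phi_\ell$. Substituting $x=\tfrac{\beta_\ell}{2}(1+a)$ reduces each to a $[-1,1]$ integral amenable to Gaussian--Chebyshev quadrature, so $\Phi$ becomes a sum over the multi-index $n\in\N=\{1,\dots,N\}^{T}$, which together with the outer $a_k$-quadrature accounts for the total power $(\pi/N)^{T+1}$. Under this substitution one has $p_{1,\ell}-x\,p_{2,\ell}=\tfrac12 p_{1,\ell}(1-a)$, so the per-round integrand collapses to $\mathcal{C}_\ell(a_{n_\ell})$ after the common factors $\beta_\ell p_{1,\ell}$ are pulled out; the gap between $e^{-\frac{1+a}{\lambda_1 p_{2,\ell}(1-a)}}$ sitting in $f_\ell$ and the $e^{-\frac{2}{\lambda_1 p_{2,\ell}(1-a)}}$ built into $\mathcal{C}_\ell$ is exactly the discrepancy that is extracted as the exponential prefactor in \eqref{eq:ana_outage_user1}, while the factor $e^{-u x}$ becomes $2^{-\frac{m\beta_\ell(1+a_{n_\ell})}{\gamma_1(1+a_k)}}$ and, on taking the product over $\ell$, yields the $2^{-\frac{m(1+a_n)}{\gamma_1(1+a_k)}\sum_\ell\beta_\ell}$ term. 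Collecting all constants then reproduces the stated expression.

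I expect the main obstacle to be that $S_T$ is a sum of independent but non-identically distributed, bounded SINRs whose convolution has no closed form; the device that unlocks the proof is the factorization of the Laplace transform together with its numerical (Gaver--Stehfest) inversion, after which the difficulty is essentially bookkeeping—tracking the Jacobians, the node-dependent denominators $(1-a)$, and deciding which partial exponentials are absorbed into $\mathcal{C}_\ell$ versus pulled outside—so that the two nested Gaussian--Chebyshev quadratures and the inversion fuse into the single displayed formula. A secondary point to verify is the accuracy of the \emph{double} discretization, i.e.\ that both $M$ and $N$ can be kept moderate while the approximation stays tight even for small $T$ and low SNR.
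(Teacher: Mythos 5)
Your proposal follows exactly the paper's route: derive the per-round PDF from the exponential law of $h_{1,t}$, factorize the Laplace transform of the sum by independence, approximate each factor $\int_0^{\beta_t}f_{\SINR_{1,t}^{x_1}}(z)e^{-sz}\,dz$ by Gaussian--Chebyshev quadrature (which is precisely where $\mathcal{C}_t(\cdot)$, the prefactor $e^{\sum_t(\lambda_1 p_{2,t})^{-1}}$, and the multi-index sum over $\mathcal{N}$ arise), invert by the Gaver--Stehfest procedure with weights $w_m$, and finish with one more Gaussian--Chebyshev quadrature for $\int_0^{\gamma_1}f_Z(z)\,dz$, which supplies the $k$-sum and the $(\pi/N)^{T+1}$ factor. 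Your accounting of which exponential pieces are absorbed into $\mathcal{C}_t$ versus pulled outside matches the paper's $c_t$ bookkeeping, and you are in fact more explicit than the paper about the final CDF-integration step, so this is the same proof.
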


\begin{IEEEproof}
The cumulative distribution function (CDF) of $\SINR_{1,t}^{x_1}$ is given by
\begin{align}
F_{\SINR_{1,t}^{x_1}}(z_t)&=\Pr\left(\frac{p_{1,t}h_{1,t}\lambda_1}{p_{2,t}h_{1,t}\lambda_1 + 1} < z_t\right)\nonumber\\
&=\Pr\left(\lambda_1 (p_{1,t}-z_tp_{2,t})h_{1,t}< z_t\right). \label{eq:outage_user1}
\end{align}
Note that  $F_{\SINR_{1,t}^{x_1}}(z_t)=1$ for $z_t\in[\beta_t,+\infty)$. While for $z_t\in(0,\beta_t)$, with the fact that $h_{1,t}$ follows the exponential distribution, we have
\begin{align}\label{eq:joint_PDF}
F_{\SINR_{1,t}^{x_1}}(z_t)=
\begin{cases}
1-e^{-\frac{z_t}{\left(p_{1,t}-z_tp_{2,t}\right)\lambda_1}}, &{\rm if}~ z_t \in [0, \beta_t)\\
1, &{\rm otherwise} .
\end{cases}
\end{align}
Then the probability density function (PDF) of $\SINR_{1,t}^{x_1}$ can be described as
\begin{align}
f_{\SINR_{1,t}^{x_1}}(z_t)=
\begin{cases}
\frac{p_{1,t}}{\lambda_1(p_{1,t}-z_tp_{2,t})^2}e^{-\frac{z_t}{\lambda_1(p_{1,t}-z_tp_{2,t})}}, &{\rm if}~ z_t \in [0, \beta_t),\\
0, &{\rm otherwise}.
\end{cases}
\end{align}
Let $f_Z (z)$ denote the PDF of $Z \triangleq\sum_{t=1}^{T}\SINR_{1,t}^{x_1}$, and note that $\SINR_{1,t}^{x_1}$ for $t=1,\ldots, T$, are statistically independent. Hence the Laplace transform of $f_Z (z)$ can be shown as $\widehat{f}_Z(s) = \prod_{t=1}^{T}\widehat{f}_{\SINR_{1,t}^{x_1}}(s)$ with
\begin{subequations}
	\begin{align}
	\widehat{f}_{\SINR_{1,t}^{x_1}}(s)&=\int_0^{\beta_t}f_{\SINR_{1,t}^{x_1}}(z_t)e^{-sz_t}dz_t, \\
	&\approx c_t\sum_{n=1}^N\mathcal{C}_t(a_n)e^{-\frac{s\beta_t(a_n+1)}{2}}, \label{eq:laplace}
	\end{align}
\end{subequations}
where the second step is obtained by applying the Gaussian-Chebyshev quadrature \cite{Ding2014}
and $c_t=\frac{2\pi}{N}\beta_t p_{1,t}e^{\frac{1}{\lambda_1p_{2,t}}}$.

Submitting \eqref{eq:laplace} into $\widehat{f}_Z(s)$, we have
\begin{align}
\widehat{f}_Z(s)
&\approx\left(\frac{2\pi}{N}\right)^{T}\left(\prod_{t=1}^{T}\beta_t p_{1,t}\right)e^{\sum\limits_{t=1}^{T}\frac{1}{\lambda_1p_{2,t}}}
\prod_{t=1}^{T}\left(\sum_{n=1}^N\!\mathcal{C}_t(a_n)e^{-\frac{s(a_n+1)\beta_t}{2}}\right)\nonumber\\
&=\left( \frac{2\pi}{N} \right)^{T}\left(\prod_{t=1}^{T}\beta_t p_{1,t}\right)e^{\sum_{t=1}^{T}\frac{1}{\lambda_1p_{2,t}}}\Bigg[\sum_{n\in\mathcal{N}} \!\Bigg(\prod_{t=1}^{T}\mathcal{C}_t(a_{n})\!\Bigg) e^{-s\frac{(a_{n}+1)}{2} \sum_{t=1}^{T} \beta_t} \Bigg],
\end{align}
where $n_t$ denotes the index of the $n$-th term of $\sum_{n=1}^N\mathcal{C}_t(a_n)e^{-\frac{s\beta_t(a_n+1)}{2}}$ for a given $t$ \cite{abate-laplace},
and $\psi=\{T,T+1,\cdots,NT\}$ with $|\psi|=(N-1)T+1$ and $\mathcal{N} \triangleq \left\{n_t \big|\left(\sum_{t=1}^{T}n_t \right)\in \psi\right\}$.
Using the inverse Laplace transform to $\widehat{f}_Z(s)$ \cite{abate-laplace}, we have
\begin{align}\label{eq:16}
f_Z(z)\!\approx \!
\left(\!\frac{2\pi}{N}\right)^{T}\!\!\!\left(\prod_{t=1}^{T}\beta_t p_{1,t}\!\right)\!\!\!\left[\sum_{\sum_{t=1}^{T}n_t\in\psi}\prod_{t=1}^{T}\mathcal{C}_t(a_{n_t})\frac{1}{z}e^{\sum\limits_{t=1}^{T}(\lambda_1p_{2,t})^{-1}}\!\!\left(\!\sum_{m = 1}^{M} w_m 2^{-\frac{m}{2z}\!\sum\limits_{t=1}^{T}\beta_t(a_{n_t}+1)} \right)\!\right]\!.
\end{align}
%
%
Therefore, the outage probability of user 1 is given by $\Pout_{1,T} = \Pr(Z\le\gamma_1)$ and thus can be approximated by \eqref{eq:ana_outage_user1}. This completes the proof.
\end{IEEEproof}

\begin{rem}\label{rem:hign_snr_user1}
In the high SNR regime, i.e., $\frac{p_{1,t} + p_{2,t}}{\sigma_1^2} \gg 1$ \cite{Ding2014}, the received SINR at user 1 during $t$-th HARQ-CC round degrades into 
\begin{align}
\SINR_{1,t}^{x_1} = \frac{p_{1,t}h_{1,t}\lambda_1}{p_{2,t}h_{1,t}\lambda_1 + 1}
\approx \frac{p_{1,t}}{p_{2,t}}.
\end{align}
In this case, if $\frac{p_{1,t}}{p_{2,t}} \ge \gamma_1$, user 1 can successfully decode its information with a probability approaching 1, and thus the outage probability $\Pout_{1,t} \rightarrow 0$.
\end{rem}


To further attain more insights of the derived outage probability in \eqref{eq:ana_outage_user1},  the diversity order of user 1 is given in the following lemma.
\begin{thm}
 The diversity order of user 1 is equal to the number of retransmissions, i.e., $T$.
\end{thm}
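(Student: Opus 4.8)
The plan is to argue directly from the definition of the diversity order,
\[
d_1 \triangleq -\lim_{\SNR\to\infty}\frac{\log \Pout_{1,T}}{\log \SNR},
\]
and to show that $\Pout_{1,T}$ decays like $\SNR^{-T}$. Since the transmit powers $p_{1,t},p_{2,t}$, and hence the ratios $\beta_t$, are held fixed while the SNR grows, the only quantity in the outage expression that carries the SNR is $\lambda_1=\frac{1}{(1+d_1^{\alpha})\sigma_1^2}$, which scales linearly with $\SNR$ as $\sigma_1^2\to 0$. It therefore suffices to extract the leading power of $\lambda_1$ in $\Pout_{1,T}$ as $\lambda_1\to\infty$.

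First I would return to the exact per-round density obtained inside the proof of Theorem~1,
\[
f_{\SINR_{1,t}^{x_1}}(z_t)=\frac{p_{1,t}}{\lambda_1\left(p_{1,t}-z_tp_{2,t}\right)^2}\,e^{-\frac{z_t}{\lambda_1\left(p_{1,t}-z_tp_{2,t}\right)}},\qquad z_t\in[0,\beta_t),
\]
rather than the Gaussian--Chebyshev approximation \eqref{eq:ana_outage_user1}, because the exact density makes the scaling transparent. Writing the outage probability as the $T$-fold integral
\[
\Pout_{1,T}=\Pr\!\left(\sum_{t=1}^{T}\SINR_{1,t}^{x_1}<\gamma_1\right)=\int_{\{z_t\ge 0,\ \sum_t z_t<\gamma_1\}}\ \prod_{t=1}^{T}f_{\SINR_{1,t}^{x_1}}(z_t)\,dz_1\cdots dz_T,
\]
I would factor one $\lambda_1^{-1}$ out of each density. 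On the integration simplex every $z_t$ stays bounded, so as $\lambda_1\to\infty$ each exponential factor tends to $1$, giving
\[
\Pout_{1,T}=\lambda_1^{-T}\Bigg(\int_{\{z_t\ge 0,\ \sum_t z_t<\gamma_1\}}\prod_{t=1}^{T}\frac{p_{1,t}}{\left(p_{1,t}-z_tp_{2,t}\right)^2}\,dz_1\cdots dz_T\Bigg)\big(1+o(1)\big).
\]
Denoting the bracketed integral by $C_1$, a finite positive constant independent of $\lambda_1$, yields $\Pout_{1,T}=C_1\lambda_1^{-T}(1+o(1))$, so that $-\log\Pout_{1,T}/\log\lambda_1\to T$, and the diversity order equals $T$. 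As a consistency check I would also read the exponent off \eqref{eq:ana_outage_user1}: each factor $\mathcal{C}_t(a_n)$ carries exactly one $\lambda_1^{-1}$, the product over $t=1,\dots,T$ contributes $\lambda_1^{-T}$, while the prefactor $e^{\lambda_1^{-2}\sum_t p_{2,t}^{-1}}$ and every $e^{-(\cdot)/\lambda_1}$ inside $\mathcal{C}_t$ tend to $1$; the same $\lambda_1^{-T}$ scaling emerges.

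The main obstacle I anticipate is not the bookkeeping of the exponent but justifying that the leading constant $C_1$ is genuinely finite and strictly positive, i.e.\ that the outage event is only polynomially (not faster-than-polynomially) rare. This requires the operating regime to be the interesting one in which outage persists at high SNR: concretely, the singularities of the integrand at $z_t=\beta_t$ must not be reached on the simplex, which is ensured when $\gamma_1$ is small enough relative to the $\beta_t$ (consistent with Remark~\ref{rem:hign_snr_user1}, where each $\SINR_{1,t}^{x_1}$ concentrates near $\beta_t$ and the sum concentrates near $\sum_t\beta_t>\gamma_1$). Establishing the uniform convergence of the exponential factors on this (possibly clipped) simplex, so that the $o(1)$ term is controlled and $C_1$ is both finite and nonzero, is the one technical point that needs care; outside this regime the limit would be degenerate, so the statement is understood over the meaningful parameter range.
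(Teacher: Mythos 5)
Your proof is correct in substance but follows a genuinely different route from the paper's. The paper works entirely with the approximate closed-form expression \eqref{eq:ana_outage_user1}: it substitutes $\rho_{k,t}=a_{k,t}\rho$ (so the transmit powers scale linearly with the nominal SNR $\rho$ while the ratios $\beta_t$ stay fixed), rewrites the expression as \eqref{eq:ana_outage_user1_tho}, and splits $\log \Pout_{1,T}/\log\rho$ into three limits: the prefactor $\prod_t \beta_t p_{1,t}$ contributes $\rho^{T}$, each $\mathcal{C}_t$ contributes $\rho^{-2}$, and the remaining factors are $O(1)$, giving $-T+2T=T$. You instead return to the exact per-round densities and analyze the $T$-fold convolution integral over the simplex directly, pulling one factor of $\lambda_1^{-1}$ out of each density and invoking (in effect) dominated convergence. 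The two parametrizations of "high SNR'' are equivalent here, since the SINRs depend on the powers and $\lambda_1$ only through the products $p_{k,t}\lambda_1$ and the fixed ratios $\beta_t$, so your $\lambda_1\to\infty$ limit and the paper's $\rho\to\infty$ limit extract the same exponent. What your approach buys is rigor and transparency: the exponent $T$ is read off the exact distribution rather than an expression already filtered through Gaussian--Chebyshev quadrature and Gaver--Stehfest inversion, and your argument makes visible the regime condition that the paper's formal bookkeeping hides. What the paper's approach buys is consistency with Theorem~1, i.e., it certifies that the derived approximation itself exhibits the correct SNR exponent.

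One point deserves sharpening. The caveat you flag about the singularities at $z_t=\beta_t$ is not merely a technical integrability issue --- it is where the theorem's actual validity is decided. The precise condition is $\gamma_1<\min_t\beta_t$: then $z_t\le\sum_\ell z_\ell<\gamma_1$ keeps every coordinate uniformly away from $\beta_t$, the integrand $\prod_t p_{1,t}(p_{1,t}-z_tp_{2,t})^{-2}$ is bounded and bounded away from zero on the (compact closure of the) simplex, and $C_1\in(0,\infty)$ follows. If instead $\gamma_1>\beta_{t_0}$ for some round $t_0$ (more generally, if some proper subset $S$ of rounds satisfies $\sum_{t\in S}\beta_t<\gamma_1$), outage can be caused by deep fades in only $|S^c|<T$ rounds while the others saturate at $\beta_t$, and the true decay is $\lambda_1^{-|S^c|}$, so the diversity order genuinely drops below $T$. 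Thus the statement holds exactly on the parameter range you identify --- which, reassuringly, is the range enforced by the design constraint \eqref{eq:con_frac} --- and your proof, with that condition made explicit, is complete.
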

\begin{IEEEproof}
Firstly, define the diversity order as follow \cite{cai2018-harq}:
\begin{align} \label{eq:diversity_order}
\mathcal{D} = - \lim_{\rho \rightarrow \infty} \frac{\log(\Pout(\rho))}{\log(\rho)}
\end{align}

Then, by defining $\rho_{1,t} = a_{1,t} \rho$ and $\rho_{2,t} = a_{2,t} \rho, ~\forall t$, the outage probability of user 1 can be rearranged as
\begin{align} \label{eq:ana_outage_user1_tho}
\Pout_{1,T} &\approx \mu\left(\rho^T\prod_{t=1}^{T}\beta_ta_{1,t}\sigma_1^2\right)e^{\rho^{-1}\lambda_1^{-1}\sum\limits_{t = 1}^{T}\frac{1}{a_{2,t} \sigma_2^2}}\left(\sum_{n\in\mathcal{N}}\left(\prod_{t=1}^{T}\mathcal{C}_t(a_n)\right)\right)
\end{align}
where  $$\mu \triangleq 2^{T+1}\ln 2\left(\frac{\pi}{N}\right)^{T+1}\sum_{n\in\mathcal{N}}\sum_{m=1}^{M}\sum_{k=1}^{N} \left({\frac{w_m\sqrt{1-a_{k}^2}}{\gamma_1(1+a_k)} } 2^{\frac{m(1+a_n)}{\gamma_1(1+a_{k})} -\sum_{t=1}^{T} \beta_t}\right)$$
and $\mathcal{C}_t(a_n)$ can be rewritten as
\begin{align} \label{eq:c_diversity}
\mathcal{C}_t(a_n)=\frac{\sqrt{1-a_n^2}}{\lambda_1 \rho^2(2a_{1,t} \sigma_1^2-a_{2,t}\sigma_2^2\beta_t(a_n+1))^2}e^{-\frac{2a_{1,t}}{a_{2,t}\rho(2a_{1,t}\sigma_1^2-a_{2,t}\sigma_2^2\beta_t(a_n+1))\lambda_1}},
\end{align}

The diversity order of user 1 is given by
\begin{align}
\mathcal{D}_1 &\!= -\lim_{\rho \rightarrow \infty} \frac{\log\left(\mu\left(\rho^T\prod_{t=1}^{T}\beta_ta_{1,t}\sigma_1^2\right)e^{\rho^{-1}\lambda_1^{-1}\sum\limits_{t = 1}^{T}\frac{1}{a_{2,t} \sigma_2^2}}\left(\sum_{n\in\mathcal{N}}\prod_{t=1}^{T}\mathcal{C}_t(a_n)\right) \right)}{\log(\rho)}  \nonumber\\
&= -\lim_{\rho \rightarrow \infty}\frac{\log(\rho^T)}{\log(\rho)} \!-\!\lim_{\rho \rightarrow \infty}\! \frac{\log\left(\!\mu\prod\limits_{t=1}^{T}\beta_ta_{1,t}\sigma_1^2 e^{\rho^{-1}\lambda_1^{-1}\sum\limits_{t = 1}^{T}\frac{1}{a_{2,t} \sigma_2^2}}\right)}{\log(\rho)} \!-\! \lim_{\rho \rightarrow \infty} \frac{\log\left(\sum\limits_{n\in\mathcal{N}}\prod\limits_{t=1}^{T}\mathcal{C}_t(a_n)\right)}{\log(\rho)} \nonumber\\
&= \!-T \!-\! \lim_{\rho \rightarrow \infty}\! \frac{\log(\frac{1}{\rho^{2T}})}{\log(\rho)}\!-\!\lim_{\rho \rightarrow \infty} \! \frac{\log\left(\sum\limits_{n\in\mathcal{N}}\prod\limits_{t=1}^{T}\frac{\sqrt{1-a_n^2}}{\lambda_1 (2a_{1,t} \sigma_1^2-a_{2,t}\sigma_2^2\beta_t(a_n+1))^2}e^{-\frac{2a_{1,t}}{a_{2,t}\rho(2a_{1,t}\sigma_1^2-a_{2,t}\sigma_2^2\beta_t(a_n+1))\lambda_1}}\right)}{\log(\rho)} \nonumber\\
&= -T+2T = T.
\end{align}
It can be seen that the diversity order of user 1 is $T$. This completes the proof.
\end{IEEEproof}

\subsubsection{Outage Probability of User 2}
The outage probability of user 2
is determined by the joint probability of accumulated SINRs, i.e., $\sum_{\ell=1}^{t}\SINR_{2,\ell}^{x_1}$ and $\sum_{\ell=1}^{t}\SNR_{2,\ell}^{x_2}$, and one cannot attain its closed-form expression \cite{choi-tcom-2016}. In view of this, we will approximate it in order to derive the outage probability in a closed-form expression.
In the high SNR regime, $\Pout_{2,T}$ can be approximated by
\begin{align} \label{eq:outage_approx_user2}
\Pout_{2,T} \approx 1 - \Pout \left(\sum_{t = 1}^T \frac{p_{1,t}}{p_{2,t}} \ge \gamma_1, \sum_{t = 1}^{T} p_{2,t} h_{2,t} \lambda_2 \ge \gamma_2 \right).
\end{align}
So the outage constraint of user 2 can be conservatively replaced by
\begin{subequations}
	\begin{align}
	&\sum_{t = 1}^T \frac{p_{1,t}}{p_{2,t}} \ge \gamma_1, \label{eq:outage_approx_eq2} \\
	&\Pout \left(\sum_{t = 1}^{T} p_{2,t} h_{2,t} \lambda_2 < \gamma_2 \right) \le \delta_2.  \label{eq:outage_approx_eq1}
	\end{align}
\end{subequations}

Now, the remaining is to explicitly derive the outage probability in \eqref{eq:outage_approx_eq1}.
To begin with, by letting $X_t = p_{2,t}h_{2,t}\lambda_2$ and $X \triangleq \sum_{t=1}^TX_t$,
then its moment-generating function of $X$ can be written as
\begin{subequations}
	\begin{align}
	\mathcal{M}_X(s) & \!= \int_{0}^{+\infty} f_X(x)e^{sx } dx \\
   &\overset{(a)}{=}\prod_{t=1}^T\int_{0}^{+\infty} f_{X_t}(x_t)e^{sx_t}dx_t\\
	& \!= \left(\prod_{t=1}^{T}\frac{1}{p_{2,t}\lambda_2}\right) \prod_{t=1}^{T}\int_{0}^{+\infty} e^{-\frac{x_t(1-s\lambda_2p_{2,t})}{p_{2,t}\lambda_2}} dx_t \\
	& \!=\! \left(\prod_{t=1}^{T}\frac{1}{p_{2,t}\lambda_2}\right) \prod_{t=1}^{T} \frac{p_{2,t}\lambda_2}{1 - s\lambda_2p_{2,t}} \\
	& = \prod_{t=1}^{T}\frac{1}{1 - s\lambda_2p_{2,t}}, \label{eq:MGF_user2}
	\end{align}
\end{subequations}
where $f_X(x)$ is the PDF of $X$, and (a) is due to the fact that $X_t$, $t = 1,\cdots,T$, are statistically independent.
Applying the inverse Laplace transform to \eqref{eq:MGF_user2}, we have
\begin{subequations}
	\begin{align}
	F_X(x) &= L^{-1} \left(L\left(\int_0^{+\infty} f_X (x) dx\right)\right) \\
	& = L^{-1} \left(\frac{1}{s}M_X(-s)\right) \\
	& = L^{-1} \left(\frac{1}{s}\prod_{t=1}^{T} \frac{1}{1+s\lambda_2p_{2,t}}\right) \label{eq:Gaver-Stehfest1}\\
	& \approx  \sum_{m = 1}^{M} w_m \prod_{t = 1}^{T} \frac{1}{1+\frac{m \lambda_2 \ln 2}{x} p_{2,t}}, \label{eq:outage_probability_user2}
	\end{align}
\end{subequations}
where \eqref{eq:outage_probability_user2} is obtained based on the Gaver-Stehfest procedure \cite{abate-laplace}, and $w_m$ is the same as that in \eqref{eq:16}.
Thus the outage probability of user 2 can be given by $\Pout_{2,T} = F_X(\gamma_2)$.

Here we claim the tightness of the approximation by \eqref{eq:outage_approx_eq1} for a special case of no retransmission, i.e., $T = 1$, in the following lemma.
\begin{lem}\label{lem:outage_tightness}
When $T = 1$, if the power allocation satisfies $\frac{p_{1}}{p_{2}} \ge \frac{\gamma_1 + \gamma_1\gamma_2}{\gamma_2}$, the outage probability of user 2 in \eqref{eq:outage_2} is strictly equal to that in \eqref{eq:outage_approx_eq1}.
\end{lem}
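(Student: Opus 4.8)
The plan is to exploit the fact that when $T=1$ there is only a single channel realization, so both events defining the success probability in \eqref{eq:outage_2} become half-lines in the single nonnegative variable $g \triangleq h_{2,1}\lambda_2$. First I would drop the time index and write the true success event as the intersection $\{\SINR_{2,1}^{x_1}\ge \gamma_1\}\cap\{\SNR_{2,1}^{x_2}\ge \gamma_2\}$, i.e.
\begin{align}
\left\{\frac{p_1 g}{p_2 g + 1}\ge \gamma_1\right\} \cap \left\{p_2 g \ge \gamma_2\right\},
\end{align}
where $p_1,p_2$ abbreviate $p_{1,1},p_{2,1}$. The key observation is that the lemma is a deterministic threshold comparison, so the exponential law of $g$ never has to be invoked beyond noting that both events live on the same half-line; this is also why the conclusion is an exact equality rather than a merely asymptotic (high-SNR) one.

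Next I would solve each inequality for $g$. The SNR inequality gives $g \ge \gamma_2/p_2$ directly. For the SINR inequality, rearranging yields $g\,(p_1 - \gamma_1 p_2) \ge \gamma_1$; since the hypothesis $\frac{p_1}{p_2}\ge \frac{\gamma_1 + \gamma_1\gamma_2}{\gamma_2} = \gamma_1\left(1 + \frac{1}{\gamma_2}\right) > \gamma_1$ forces $p_1 - \gamma_1 p_2 > 0$, the division preserves the inequality direction and gives $g \ge \frac{\gamma_1}{p_1 - \gamma_1 p_2}$. Hence the true success event is exactly $\bigl\{g \ge \max\bigl(\frac{\gamma_1}{p_1-\gamma_1 p_2},\,\frac{\gamma_2}{p_2}\bigr)\bigr\}$.

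The crux is to show that under the stated power condition the SNR threshold dominates the SINR threshold, so the maximum collapses to $\gamma_2/p_2$ and the SINR constraint is redundant. I would verify that $\frac{\gamma_1}{p_1-\gamma_1 p_2}\le \frac{\gamma_2}{p_2}$ is, after cross-multiplying by the positive quantities $p_2$ and $(p_1-\gamma_1 p_2)$, equivalent to $p_2(\gamma_1+\gamma_1\gamma_2)\le \gamma_2 p_1$, i.e. precisely to the hypothesis $\frac{p_1}{p_2}\ge \frac{\gamma_1+\gamma_1\gamma_2}{\gamma_2}$. Thus the true success event reduces to $\{p_2 h_{2,1}\lambda_2 \ge \gamma_2\}$, whose complement is exactly the event appearing in \eqref{eq:outage_approx_eq1}; its deterministic companion \eqref{eq:outage_approx_eq2} is automatically in force since the hypothesis implies $p_1/p_2 \ge \gamma_1$. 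Equating complements then gives $\Pout_{2,1}$ in \eqref{eq:outage_2} equal to that in \eqref{eq:outage_approx_eq1}.

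There is essentially no analytic obstacle: the only point requiring care is the sign of $p_1-\gamma_1 p_2$. I would emphasize that the hypothesis strictly exceeds $\gamma_1$, so this denominator is positive and no inequality reversal occurs when isolating $g$, which is exactly what upgrades the equivalence from conservative to exact.
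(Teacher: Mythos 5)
Your proof is correct and follows essentially the same route as the paper: both arguments reduce the two success conditions to half-lines in $h_{2}$, observe that the hypothesis $\frac{p_1}{p_2}\ge\frac{\gamma_1+\gamma_1\gamma_2}{\gamma_2}$ is exactly equivalent to the SINR threshold $\frac{\gamma_1}{(p_1-\gamma_1 p_2)\lambda_2}$ being dominated by the SNR threshold $\frac{\gamma_2}{p_2\lambda_2}$, and conclude that the joint event collapses to $\{p_2 h_2 \lambda_2 \ge \gamma_2\}$. Your explicit check that $p_1-\gamma_1 p_2>0$ (so no inequality reversal occurs) is a small point of care the paper leaves implicit, but the substance is identical.
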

\begin{IEEEproof}
	Firstly, we have 
   \begin{align}
   \frac{p_{1}}{p_{2}} \ge \frac{\gamma_1 + \gamma_1 \gamma_2}{\gamma_2} 
     \Leftrightarrow 	\frac{\gamma_1}{\left(p_{1}-\gamma_1p_{2}\right)\lambda_2} \le \frac{\gamma_2}{p_{2}\lambda_2}. \label{eq:outage_equivalence}
	\end{align}
		Then, recall the outage probability of user 2 when $T=1$ and rewrite it as follows:
	\begin{subequations}
		\begin{align}
		\Pout_{2,1}  & = 1 - \Pout \left( \frac{p_{1}h_{2}\lambda_2}{p_{2}h_{2}\lambda_2 + 1} \ge \gamma_1,  p_{2} h_{2}\lambda_2 \ge \gamma_2\right)\\
		& = 1 - \Pout\left(h_{2} \ge \frac{\gamma_1}{\left(p_{1}-\gamma_1p_{2}\right)\lambda_2}, h_{2} \ge \frac{\gamma_2}{p_{2}\lambda_2}\right),\\
		& \overset{(a)}{=} 1 - \Pout\left( h_{2} \ge \frac{\gamma_2}{p_{2}\lambda_2}\right), \label{eq:29b}
 		\end{align}
	\end{subequations}
where (a) is obtained by assuming $ \frac{\gamma_1}{\left(p_{1}-\gamma_1p_{2}\right)\lambda_2} \le \frac{\gamma_2}{p_{2}\lambda_2}$, which is the same as the right-hand side of \eqref{eq:outage_equivalence}. This indicates that if the power allocation satisfies the left-hand side of \eqref{eq:outage_equivalence}, the the outage probability of user 2 in \eqref{eq:outage_2} is strictly equal to that in \eqref{eq:outage_approx_eq1}, i.e., \eqref{eq:29b}  when $T = 1$.
This completes the proof.
\end{IEEEproof}



\begin{lem}
	The diversity order of user 2 is equal to the number of retransmission rounds, i.e., $T$.
	\end{lem}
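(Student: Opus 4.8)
The plan is to reuse the high-SNR framework established in the proof of user~1's diversity order. Starting from the definition \eqref{eq:diversity_order}, I would take the outage probability of user~2 to be the conservative approximation derived above, namely $\Pout_{2,T} = F_X(\gamma_2) = \Pr\!\left(\sum_{t=1}^{T} p_{2,t}h_{2,t}\lambda_2 < \gamma_2\right)$, since the companion constraint \eqref{eq:outage_approx_eq2} is deterministic and does not affect the tail behavior. As in the user~1 argument I would introduce the SNR scaling $p_{2,t} = a_{2,t}\rho$ and let $\rho \to \infty$, so that the task reduces to determining the order in $\rho$ of $\Pr(X < \gamma_2)$, where $X \triangleq \sum_{t=1}^{T} X_t$ and each $X_t = p_{2,t}h_{2,t}\lambda_2$ is exponentially distributed with mean $p_{2,t}\lambda_2 = a_{2,t}\rho\lambda_2$, the summands being mutually independent.

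The key step is to extract the leading-order behavior of the CDF $F_X$ near the origin, which is where the probability mass concentrates once every mean $a_{2,t}\rho\lambda_2$ grows without bound. Writing $\mu_t = (p_{2,t}\lambda_2)^{-1} = (a_{2,t}\rho\lambda_2)^{-1}$ for the rate of $X_t$, the event $\{X < \gamma_2\}$ forces every $X_t$ into a small neighborhood of zero, where the joint density is $\prod_{t=1}^{T}\mu_t e^{-\mu_t x_t} \approx \prod_{t=1}^{T}\mu_t$. Integrating this over the simplex $\{x_t \ge 0,\ \sum_t x_t < \gamma_2\}$, whose volume is $\gamma_2^{T}/T!$, yields the leading term
\begin{align}
F_X(\gamma_2) = \frac{\gamma_2^{T}}{T!}\prod_{t=1}^{T}\mu_t + o\!\left(\rho^{-T}\right) = \frac{\gamma_2^{T}}{T!\,\lambda_2^{T}\prod_{t=1}^{T}a_{2,t}}\,\rho^{-T} + o\!\left(\rho^{-T}\right).
\end{align}
Equivalently, one may substitute the scaling directly into the Gaver--Stehfest expression \eqref{eq:outage_probability_user2}, in which each factor obeys $\bigl(1 + \tfrac{m\lambda_2\ln 2}{\gamma_2}a_{2,t}\rho\bigr)^{-1} \to \tfrac{\gamma_2}{m\lambda_2\ln 2\,a_{2,t}}\rho^{-1}$, so that the whole expression collapses to a constant multiple of $\rho^{-T}$.

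With the leading term in hand, the diversity order follows immediately:
\begin{align}
\mathcal{D}_2 = -\lim_{\rho\to\infty}\frac{\log F_X(\gamma_2)}{\log\rho} = -\lim_{\rho\to\infty}\frac{\log\!\left(C\,\rho^{-T}\right)}{\log\rho} = T,
\end{align}
where $C$ collects all the $\rho$-independent constants. The only delicate point is justifying that the coefficient $C$ of $\rho^{-T}$ is strictly positive rather than zero, since a vanishing leading coefficient would shift the exponent; this is transparent in the simplex-volume derivation, where $C = \gamma_2^{T}/(T!\,\lambda_2^{T}\prod_t a_{2,t}) > 0$, whereas in the Gaver--Stehfest route it would require confirming $\sum_{m=1}^{M} w_m m^{-T} \neq 0$. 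For this reason I would favor the exact small-argument expansion of the sum of independent exponentials, which also makes clear the intuition that each of the $T$ retransmission rounds contributes one unit of time diversity.
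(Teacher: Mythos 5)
Your proposal is correct, and its second route (substituting $p_{2,t}=a_{2,t}\rho$ directly into the Gaver--Stehfest expression \eqref{eq:outage_probability_user2} so that each factor behaves as a constant times $\rho^{-1}$) is exactly the argument the paper gives: the paper writes $\Pout_{2,T}=\sum_{m=1}^{M}w_m\prod_{t=1}^{T}(1+A_{2,m,t}a_{2,t}\rho)^{-1}$, replaces each factor by its high-SNR limit, pulls out $\rho^{-T}$, and observes that the remaining $\rho$-independent terms contribute nothing to $-\log(\cdot)/\log\rho$. Your preferred route --- the exact small-argument expansion of the CDF of a sum of independent exponentials, $F_X(\gamma_2)=\frac{\gamma_2^{T}}{T!}\prod_{t}\mu_t+o(\rho^{-T})$ obtained by integrating the near-origin density over the simplex --- is genuinely different and, in one respect, stronger: the paper's computation tacitly assumes that the constant $\sum_{m=1}^{M}w_m\prod_{t=1}^{T}A_{2,m,t}^{-1}$ is nonzero and finite (otherwise its logarithm divided by $\log\rho$ is not obviously zero, and the claimed diversity order could shift). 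You identify this gap explicitly and close it by working with the exact distribution, where the leading coefficient $\gamma_2^{T}/(T!\,\lambda_2^{T}\prod_t a_{2,t})$ is manifestly positive. The trade-off is that the paper's route stays consistent with the numerical-inversion formula used throughout for the power-allocation design, whereas yours bypasses the Gaver--Stehfest approximation entirely and argues from the true CDF; both yield $\mathcal{D}_2=T$, and your version is the more self-contained proof of the asymptotic claim.
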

\begin{IEEEproof}
Similar to that for user 1, we first rewrite the outage probability of user 2 as
\begin{align}
\Pout_{2,T} = \sum_{m = 1}^{M} w_m \prod_{t = 1}^{T} \frac{1}{1+ A_{2,m,t} \rho_{2,t}}
\end{align}
where $A_{2,m,t} = \frac{m \lambda_2 \sigma_2^2 \alpha_{2,t}\ln 2}{\gamma_2}$. By letting $\rho_{2,t} = a_{2,t} \rho$ and focusing on high SNR, the diversity order of user 2 can be given by
\begin{subequations}
	\begin{align}
	\mathcal{D}_2 &= -\lim_{\rho \rightarrow \infty} \frac{\log\left(\sum_{m = 1}^{M} w_m \prod_{t = 1}^{T} \frac{1}{A_{2,m,t} a_{2,t} \rho}\right)}{\log(\rho)} \\
	 &=  -\lim_{\rho \rightarrow \infty} \frac{\log\left(\frac{1}{\rho^T}\right)}{\log\left(\rho\right)}  - \lim_{\rho \rightarrow \infty} \frac{\log\left(\prod_{t = 1}^{T} \frac{1}{a_{2,t}}\right)}{\log\left(\rho\right)}  - \lim_{\rho \rightarrow \infty} \frac{\log\left(\sum_{m = 1}^{M} w_m \prod_{t = 1}^{T} \frac{1}{A_{2,m,t}}\right)}{\log\left(\rho\right)} \\
	 & = T
	\end{align}
\end{subequations}
It can be seen that user 1 can achieve the full diversity gain in the HARQ-CC enabled NOMA systems.
\end{IEEEproof}

On the roles of the derived outage probabilities in \eqref{eq:ana_outage_user1} and \eqref{eq:outage_probability_user2}  we have the following remark. 
\begin{rem}
	 Firstly, with \eqref{eq:ana_outage_user1} and \eqref{eq:outage_probability_user2}, the complex outage probabilities characterized in \eqref{eq:outage_users} are explicitly approximated with high accuracy. Furthermore, based on the derived outage probabilities, the diversity orders are analyzed and both the users can achieve the full diversity gain. Secondly, in the following sections, we will use \eqref{eq:ana_outage_user1} and \eqref{eq:outage_probability_user2} to optimize the power allocation in order to improve the system performance.
\end{rem}

Up to now, the outage probabilities of users are expressed in closed-form as in \eqref{eq:ana_outage_user1} and \eqref{eq:outage_probability_user2}. However, it is still difficult to solve problem \eqref{p:power_min_ARQ}, since the outage probability constraints are nonconvex and extremely complicated, especially for user 1. One possible approach to treat such a nonconvex problem is via exhaustive search. By using the exhaustive search method, the transmit power can be discretized into $L$ levels, and one needs to exhaustively search over all the $L^{2T}$ possible combinations of power allocation to find the one that minimizes the average transmit power. For example, for $T = 3$ and $L = 100$, it requires to search over $10^{12}$ possible combinations, which is computationally prohibitive in practice. Thus this approach is viable only when $T$ is small.
To resolve this challenge, in the following, a suboptimal solution will be developed by carrying out approximations to problem \eqref{p:power_min_ARQ}. Note that the exhaustive search based method will be used in section \ref{sec:simulation} as a benchmark to evaluate the accuracy of the approximation-based algorithm.


\subsection{Adaptive Power Allocation Design} \label{sec:power_min}
Recall that the outage probability of user 2 can be approximated in the high SNR region and this approximated expression shows good tightness in the low SNR region, as presented in Fig. \ref{fig:outage_user2}.
To avoid high computational complexity to directly solve problem \eqref{p:power_min_ARQ}, we approximate the outage of user 1 in a similar way to that of user 2. Specifically, at high SNR region (recall Remark \ref{rem:hign_snr_user1}), if the power allocation satisfies $\frac{p_{1,t}}{p_{2,t}} \ge \gamma_1$, we can approximately believe that user 1 can decode its information with probability approaching 1 (i.e., $\Pout_{1,t} \approx 0,~\forall t$) at each HARQ-CC round, and the retransmission probability in \eqref{eq:retransmission_probability} is degraded into $\hat{\Pout}_t =  \Pout_{2,t-1}$.
Note that with the constraint $\frac{p_{1,t}}{p_{2,t}} \ge \gamma_1,~ \forall t$, the requirement of the power allocation in \eqref{eq:outage_approx_eq2} is automatically satisfied.
Then, by inserting the outage probability of $ \Pout_{2,t-1}$ in \eqref{eq:outage_probability_user2}, the average power minimization problem can be approximated by
\begin{subequations} \label{p:power_min_ARQ_approx}
	\begin{align}
	\min_{{\bf p_1}, {\bf p_2}} ~& p_{1,t} + p_{2,t} +\sum_{t=2}^{T} (p_{1,t} + p_{2,t}) \sum_{m = 1}^{M} w_m \prod_{\ell = 1}^{t-1} \frac{1}{1+g_m p_{2,\ell}}\\
	\st ~~& \sum_{m = 1}^{M} w_m \prod_{t = 1}^{T} \frac{1}{1+g_m p_{2,t}} \le \delta_2,  \label{eq:33b}\\
	& \frac{p_{1,t}}{p_{2,t}} \ge \gamma_1,   \forall  t,\label{eq:con_frac}\\
	& p_{1,t} \ge 0, ~ p_{2,t} \ge 0, \forall  t,\\
	&p_{1,t} + p_{2,t} \le P_{\max},  \forall t,
	\end{align}
\end{subequations}
where $g_m = \frac{m \lambda_2 \ln 2 }{\gamma_2}$. Notice that problem \eqref{p:power_min_ARQ_approx} is still nonconvex and complicated due to the outage probabilities of users both in the objective function and the constraint \eqref{eq:33b}, which are the multiplications of multiple variables. Generally, one can resort to the geometric programming (GP) method to handle such nonconvex problems, e.g., \cite{larsson-tcomm-2013}. However, the GP method is not valid to problem \eqref{p:power_min_ARQ_approx} since the parameters ($w_m$'s) in the outage probability expression can be negative, i.e., $w_2,w_4,w_6,\cdots$.
In what follows, we will show how to use a convex approximation method to solve problem \eqref{p:power_min_ARQ_approx} efficiently.

\noindent{\bf Successive Convex Approximation (SCA) based Algorithm :}
First,
{{by introducing auxiliary variables $u_1$ and $u_2$, the objective function can be recast as}}
\begin{subequations}
	\begin{align}
	\min_{{\bf p_1}, {\bf p_2},u_1,u_2} ~~& p_{1,1} + p_{2,1} + u_1 + u_2\\
	\st ~~&\sum_{t = 2}^T p_{1,t} \sum_{m = 1}^{M} w_m \prod_{\ell = 1}^{t-1} \frac{1}{1+g_m p_{2,\ell}} \le u_1,\\
	&\sum_{t = 2}^T p_{2,t} \sum_{m = 1}^{M} w_m \prod_{\ell = 1}^{t-1} \frac{1}{1+g_m p_{2,\ell}} \le u_2.
	\end{align}
\end{subequations}
 Then we define
\begin{align} \label{eq:CoV}
\exp(x_{m,t}) &\triangleq \frac{1}{1+g_m p_{2,t}}, ~~
\exp(y_t) \triangleq p_{1,t}, ~~
\exp(z_t) \triangleq p_{2,t}.
\end{align}
Thus problem \eqref{p:power_min_ARQ_approx} can be rewritten as
\begin{subequations} \label{p:power_min_CoV}
	\begin{align}
	\min_{\substack{\{x_{m,t}\},\{y_t\},\{z_t\},u_1,u_2}} ~~&~ \exp(y_1) + \exp(z_1) + u_1 + u_2 \label{eq:power_min_CoV1}\\
	\st ~~~~& \sum_{t = 2}^T \sum_{m=1}^{M} w_m \exp\left(y_t + \sum_{\ell=1}^{t-1}x_{m,\ell}\right) \le u_1,  \label{eq:power_min_CoV2}\\
	 & \sum_{t = 2}^T  \sum_{m=1}^{M} w_m \exp\left(z_t + \sum_{\ell=1}^{t-1}x_{m,\ell}\right) \le u_2,  \label{eq:power_min_CoV21}\\
	& \sum_{m=1}^{M} w_m \exp\left(\sum_{t=1}^{T} x_{m,t}\right) \le \delta_2,\label{eq:power_min_CoV3}\\
	& \exp(x_{m,t}) + g_m \exp(x_{m,t} + z_t)=1, \label{eq:power_min_CoV4}\\
	& y_t-z_t \ge \log(\gamma_1), \label{eq:power_min_CoV5}\\
	& \exp(y_t) + \exp(z_t) \le P_{\rm max}. \label{eq:power_min_CoV7}
	\end{align}
\end{subequations}

Note that problem \eqref{p:power_min_CoV} is still non-convex.
However, compared to problem \eqref{p:power_min_ARQ_approx}, the structure of problem \eqref{p:power_min_CoV} allows us to employ the convex approximation method \cite{XU-TSP-2017} to approximate it by a convex one.
As a result, problem \eqref{p:power_min_CoV} can be approximately solved in an iterative manner.
Specifically, in the $r$-th iteration, at a feasible point $\left\{x_{m,t}^{(r-1)},y_t^{(r-1)},{z}_t^{(r-1)}\right\}$ of problem \eqref{p:power_min_CoV},
one needs to do linear approximation to the nonconvex constraints in \eqref{p:power_min_CoV}. Specifically, constraint \eqref{eq:power_min_CoV2} and  \eqref{eq:power_min_CoV21} can be respectively approximated by
\begin{subequations}
	\begin{align}
	&\!\!\!\!\sum_{t = 2}^T\!\Bigg(\!\sum_{b = 1}^{M/2} w_{2b\!-\!1} \exp\left({{y}_t \!+\! \sum\limits_{\ell=1}^{t\!-\!1}x_{2b\!-\!1,\ell}}\right)
	\!\!+\! \sum_{b = 1}^{M/2} \!w_{2b} \!\exp\left({{y}_t^{(r\!-\!1)}\!+\!\sum\limits_{\ell=1}^{t\!-\!1}x_{2b,\ell}^{(r\!-\!1)}} \right) \nonumber \\
	&~~~~~~~~~~~~~~~~~~~~~~~~~~~~~~~~~~~~~~~~\times \left(\!1\!+\!{y}_t - {y}_t^{(r-1)} +\sum_{\ell=1}^{t-1} x_{2b,\ell}  - \sum_{\ell=1}^{t-1} x_{2b,\ell}^{(r-1)} \right)\!\!\Bigg)  \le u_1, \label{eq:objective_approx1} \\
	&\!\!\!\!\sum_{t = 2}^T\!\Bigg(\!\sum_{b = 1}^{M/2} w_{2b\!-\!1} \exp\left({{z}_t \!+\! \sum\limits_{\ell=1}^{t\!-\!1}x_{2b\!-\!1,\ell}}\right)
	\!\!+\! \sum_{b = 1}^{M/2} \!w_{2b} \!\exp\left({{z}_t^{(r\!-\!1)}\!+\!\sum\limits_{\ell=1}^{t\!-\!1}x_{2b,\ell}^{(r\!-\!1)}} \right) \nonumber \\
	&~~~~~~~~~~~~~~~~~~~~~~~~~~~~~~~~~~~~~~~~\times \left(1\!+\!{z}_t - {z}_t^{(r-1)} +\sum_{\ell=1}^{t-1} x_{2b,\ell}  - \sum_{\ell=1}^{t-1} x_{2b,\ell}^{(r-1)} \right)\!\!\Bigg)  \le u_2,\label{eq:objective_approx2}
	\end{align}
\end{subequations}
which are convex upper-bound of the left-hand side of constraints \eqref{eq:power_min_CoV2} and \eqref{eq:power_min_CoV21} respectively by applying the first-order Taylor approximation. The other nonconvex constraints can be treated similarly. Consequently, we obtain the following approximation of problem \eqref{p:power_min_CoV}
\begin{subequations} \label{p:power_min_sca}
	\begin{align}
	&\min_{\substack{\{x_{m,t}\},\{y_t\},\{{z}_t\},u_1,u_2}}  ~ \exp\left(y_1 \right) + \exp\left(z_1\right) + u_1 + u_2 \\
	\st ~~&\eqref{eq:objective_approx1},\eqref{eq:objective_approx2}  \nonumber \\
	&\sum_{b = 1}^{M/2} \!w_{2b-1} \exp\left({\sum\limits_{t=1}^{T}x_{2b\!-\!1,t}}\right) \!\!+\!\sum_{b= 1}^{M/2} \!w_{2b} \!\exp\left({ \sum\limits_{t=1}^{T}x_{2b,t}^{(r\!-\!1)}} \right)\left(1\!+\! \sum_{t=1}^{T}x_{2b,t} \!-\! \sum_{t=1}^{T}x_{2b,t}^{(r-1)} \!\right) \!\le\! \delta_2, \\
	&\exp\left({x_{m,t}^{(r\!-\!1)}}\right)\left(1\!+\!x_{m,t} \!-\! x_{m,t}^{(r\!-\!1)}\right) \!+\! g_m \exp\left({x_{m,t}^{(r\!-\!1)} \!+\! z_{t}^{(r\!-\!1)}}\right)\left(1\!+\!x_{m,t} \!+\! {z}_t \!-\! x_{m,t}^{(r\!-\!1)} \!-\! z_{t}^{(r\!-\!1)}\right) = 1,\\
	&y_t-z_t \ge \log(\gamma_1),\\
	&\exp\left({y_t}\right) + \exp\left({z}_t\right) \le P_{\rm max},
	\end{align}
\end{subequations}
which is a convex optimization problem and thus can be efficiently solved by standard convex solvers, e.g., \texttt{CVX} \cite{cvx}. After solving problem \eqref{p:power_min_sca}, {{if the gap between the optimal values in two successive iterations is larger than a desired accuracy $\epsilon$}}, one needs to update $\left\{x_{m,t}^{(r)},y_t^{(r)},{z}_t^{(r)}\right\}$ with the obtained solutions, and then solve the problem in $(r+1)$-th iteration. Finally, the power allocation $\{p_{1,t}^{\star},p_{2,t}^{\star}\}$ can be obtained based on \eqref{eq:CoV}. The procedure of the SCA based algorithm is outlined in Algorithm \ref{alg:sca}, and in fact, we can draw the following proposition.

\begin{algorithm}[!t]\small
	\caption{~SCA based algorithm for solving \eqref{p:power_min_CoV} }\label{alg:sca}
	\begin{algorithmic}[1]
		\STATE {{\bf Initialization:} Set $r = 0$, given a set of feasible $\left\{x_{m,t}^{(0)},y_t^{(0)},{z}_t^{(0)}\right\}$, and desired accuracy $\epsilon$.}
		\STATE {\bf repeat}
		\STATE {\quad Update $\left\{x_{m,t}^{(r)},y_t^{(r)},{z}_t^{(r)}\right\}$ by solving problem \eqref{p:power_min_sca}.}
		\STATE {\quad Set $r \leftarrow r+1$.}
		\STATE {{\bf until} the power gap between two successive iteration is smaller than $\epsilon$.}
		\STATE {{\bf Output:} $\left\{\!x_{m,t}^{(r)},y_t^{(r)}\!,{z}_t^{(r)}\right\}$, and calculate $\{p_{1,t}^{\star},p_{2,t}^{\star}\}$ based on \eqref{eq:CoV}.}
	\end{algorithmic}

\end{algorithm}

\begin{prop}\label{prop1}
	The proposed algorithm can continuously decrease the power consumption gap between two successive iterations and guarantee the generated power consumption sequence converges to at least a stationary point of problem \eqref{p:power_min_ARQ_approx}.
\end{prop}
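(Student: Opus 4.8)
The plan is to recognize Algorithm~\ref{alg:sca} as an instance of the inner (successive) convex approximation framework of \cite{XU-TSP-2017,larsson-tcomm-2013}, and to prove the proposition by verifying that the linearized surrogates appearing in \eqref{p:power_min_sca} satisfy the three defining properties of a valid convex approximation at the current iterate $\{x_{m,t}^{(r-1)},y_t^{(r-1)},z_t^{(r-1)}\}$: (P1) the surrogate globally upper-bounds the corresponding nonconvex function; (P2) it agrees with the function in value at the expansion point; and (P3) it agrees with the function in gradient at the expansion point. Once these are in place, a monotone-convergence argument gives the first claim and a fixed-point/KKT argument gives the second.

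First I would establish (P1)--(P3) for each nonconvex constraint, exploiting the sign structure of the weights $w_m$. The terms carrying $w_{2b-1}$ enter with positive sign, so $w_{2b-1}\exp(\cdot)$ is convex and is retained exactly; the terms carrying $w_{2b}$ enter with negative sign, so each $w_{2b}\exp(\cdot)$ is concave, and replacing it by its first-order Taylor expansion yields an affine \emph{over}-estimator (the tangent of a concave function lies above it). The sum of the retained convex terms and these affine over-estimators is therefore convex and globally dominates the original left-hand sides of \eqref{eq:power_min_CoV2}--\eqref{eq:power_min_CoV4}, which is (P1); since a first-order Taylor expansion is exact in both value and gradient at its base point, (P2) and (P3) follow at once. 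Consequently the feasible set of \eqref{p:power_min_sca} is an \emph{inner} approximation of that of \eqref{p:power_min_CoV}: any point feasible for \eqref{p:power_min_sca} is feasible for \eqref{p:power_min_CoV}, since, e.g., the true left-hand side of \eqref{eq:power_min_CoV2} is bounded above by its surrogate \eqref{eq:objective_approx1}, which in turn is bounded by $u_1$.

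Next I would turn the inner-approximation property into monotonicity. By (P2) the previous iterate $\{x_{m,t}^{(r-1)},y_t^{(r-1)},z_t^{(r-1)}\}$ is itself feasible for the $r$-th subproblem \eqref{p:power_min_sca}, so by induction every subproblem is feasible and its optimal objective—which is the \emph{exact} objective of \eqref{p:power_min_CoV}—cannot exceed the value at the previous iterate; hence the generated sequence of power-consumption values is non-increasing. Being bounded below by zero, it converges, so the gap between two successive iterations vanishes, establishing the first claim. For the stationarity claim I would use compactness of the feasible region (guaranteed by the box constraint $\exp(y_t)+\exp(z_t)\le P_{\max}$) to extract a convergent subsequence; at any limit point the algorithm maps the point to itself, so that point is optimal for its own surrogate subproblem. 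Writing the KKT conditions of \eqref{p:power_min_sca} there and invoking the value/gradient consistency (P2)--(P3), the surrogate gradients coincide with the true gradients, so the KKT system collapses exactly onto that of \eqref{p:power_min_CoV}; as the change of variables \eqref{eq:CoV} is a smooth bijection, this is equivalent to a stationary point of \eqref{p:power_min_ARQ_approx}.

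The hard part will be the nonconvex equality constraint \eqref{eq:power_min_CoV4}, for which the one-sided inner-approximation reasoning used for the inequalities does not apply directly, since linearizing an equality neither inflates nor deflates the feasible set in an obvious way. The cleanest route I would take is to eliminate $x_{m,t}$ through its defining relation $x_{m,t}=-\log(1+g_m\exp(z_t))$ inherited from \eqref{eq:CoV}, recasting \eqref{p:power_min_CoV} as an equivalent problem in $(y_t,z_t)$ with inequality constraints only, to which the argument above applies verbatim. Alternatively, one may keep the linearized equality and invoke a version of the SCA convergence theorem that admits equality constraints, which additionally requires checking a constraint qualification (e.g.\ MFCQ or LICQ) at the limit point so that the KKT conditions are necessary for stationarity. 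I would also note, as a routine but necessary check, that the initial point supplied in Algorithm~\ref{alg:sca} is feasible, so that the induction on subproblem feasibility can be started.
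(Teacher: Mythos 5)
Your proposal is correct in substance, but it is worth noting that the paper does not actually carry out this argument: its ``proof'' of Proposition~\ref{prop1} is a one-line deferral to \cite[Theorem~1]{Li-2013-tsp}. What you have written is, in effect, the standard inner-approximation/SCA convergence argument that the cited theorem encapsulates: positive-weight exponentials are kept, negative-weight (concave) terms are replaced by their tangents, which over-estimate them, so each subproblem is an inner convex restriction that is tight in value and gradient at the expansion point; feasibility of the previous iterate then gives monotone non-increase of the true objective, and value/gradient consistency collapses the surrogate KKT system onto the true one at a limit point. You also correctly verify the sign pattern of the $w_m$ (the paper confirms $w_2,w_4,\dots$ are negative) and the feasibility chain through the epigraph variables $u_1,u_2$, neither of which the paper spells out.

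The one place your argument is genuinely shaky is the treatment of the nonconvex equality \eqref{eq:power_min_CoV4}, which you rightly single out. Your first proposed fix --- eliminating $x_{m,t}=-\log(1+g_m\exp(z_t))$ and claiming the inequality argument then ``applies verbatim'' --- does not go through: after substitution the constraints of \eqref{p:power_min_CoV} revert to terms of the form $w_m\exp\bigl(y_t-\sum_{\ell}\log(1+g_m\exp(z_\ell))\bigr)$, i.e.\ exponentials of \emph{concave} (not affine) functions of the decision variables, which are neither convex nor concave, so the clean ``keep the convex part, linearize the concave part'' split is lost and the one-sided bound (P1) no longer follows from tangent inequalities. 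The whole point of introducing $x_{m,t}$ in \eqref{eq:CoV} is to make every exponent affine; undoing it destroys that structure. Your second route --- keeping the linearized equality and invoking an SCA convergence result that admits equality constraints together with a constraint qualification at the limit point --- is the correct one, and is essentially what \cite{Li-2013-tsp} does; if you take that route you should state it as the primary argument rather than as an alternative, and note explicitly that the linearized equality is exact in value and gradient at the expansion point so that the previous iterate remains feasible for the new subproblem (which is what the monotonicity induction actually needs).
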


\begin{proof}
	The proof of Proposition \ref{prop1} is similar as that in \cite[Theorem 1]{Li-2013-tsp}, thus we omit it here.
	\hfill $\blacksquare$
\end{proof}

%

\begin{rem}
It is important to point out that, as a counterpart, the transmission round minimization problem under outage constraints can also be inherently solved by Algorithm \ref{alg:sca}. In particular, the corresponding optimization problem can be formulated as
\begin{subequations}\label{p:round_min}
	\begin{align}
	\min_{{\bf p_1}, {\bf p_2},\hat{t}}~~~& \hat{t} \\
	\st ~~~
	& \Pout_{k, \hat{t}}\le \delta_k, {\rm for}~ k =1,2.  \label{eq:round_min_outage}\\
	& p_{1,t} \ge 0, ~ p_{2,t} \ge 0,  \forall t \in \{1,...,\hat{t}\} \\
	& p_{1,t} + p_{2,t} \le P_{\max}, \forall t \in \{1,...,\hat{t}\} \label{eq:round_min_power}\\
	& \hat{t} \in \{1,2,...,T\}.
	\end{align}
\end{subequations}

Note that, although one can plug the outage probabilities in \eqref{eq:ana_outage_user1} and \eqref{eq:outage_probability_user2} into problem \eqref{p:round_min}, the problem still doesn't have an unified expression as the outage probabilities have different expressions with different $\hat{t}$.
Notice that the outage probabilities of users are monotonically decreasing with $\hat{t}$. Thus it allows us to efficiently search $\hat{t}$ by using the bisection search method \cite{Boyd-2009}.
Specifically, for each given $\hat{t}$, one only needs to check the feasibility of problem \eqref{p:round_min}. If it is feasible, one needs to decrease $\hat{t}$; otherwise $\hat{t}$ should be increased.
\end{rem}

In this Section, we focus on the problems in the two-user case. However, in a practical communication system, there are randomly deployed with multiple users. Thus, it motivates us to investigate the system design of the multi-user scenario. In the following section, we will consider the power minimization design in a multi-user NOMA system, and show how the derived results in the two-user case can be extended to the multi-user case.

\section{User Paring and Power Allocation in Multi-user Cases} \label{sec:multi-user}

In this section, we consider a more general scenario where a single-antenna BS tries to serve $2K$ single antenna users, as shown in Fig. \ref{fig:sys_model2}. In particular, the users are divided into two groups, denoted as $\kset_1$ and $\kset_2$, in terms of their distances to the BS.
We assume that the $K$ cell center users (CUs) in $\mathcal{K}_1 =\left\{ {1, 2, \cdots, K} \right\}$ are randomly distributed within a circle district near to the BS with radius $R_c$ and the other $K$ cell edge users (EUs) in $\mathcal{K}_2=\left\{ {K+1, K+2, \cdots, 2K} \right\}$ are located in a ring district with inner radius $R_c$ and outer radius $R_e$. To benefit the advantage of NOMA, each CU is paired with one EU to perform non-orthogonal transmission.
Denote by $\bf V = [v_{i,j}]$ the user paring matrix, where $v_{ij} = 1$ denotes that user $i$ in group $\kset_1$ and user $j$ in group $\kset_2$ are paired to perform NOMA transmission, otherwise $v_{ij} = 0$.
Moreover, we assume that different pairs are allocated with different orthogonal resource blocks, such as orthogonal frequency division multiplexing (OFDM) subcarriers, to avoid inter-pair interference.
\begin{figure}[!t]
	\centering
	\includegraphics[width=0.7\linewidth]{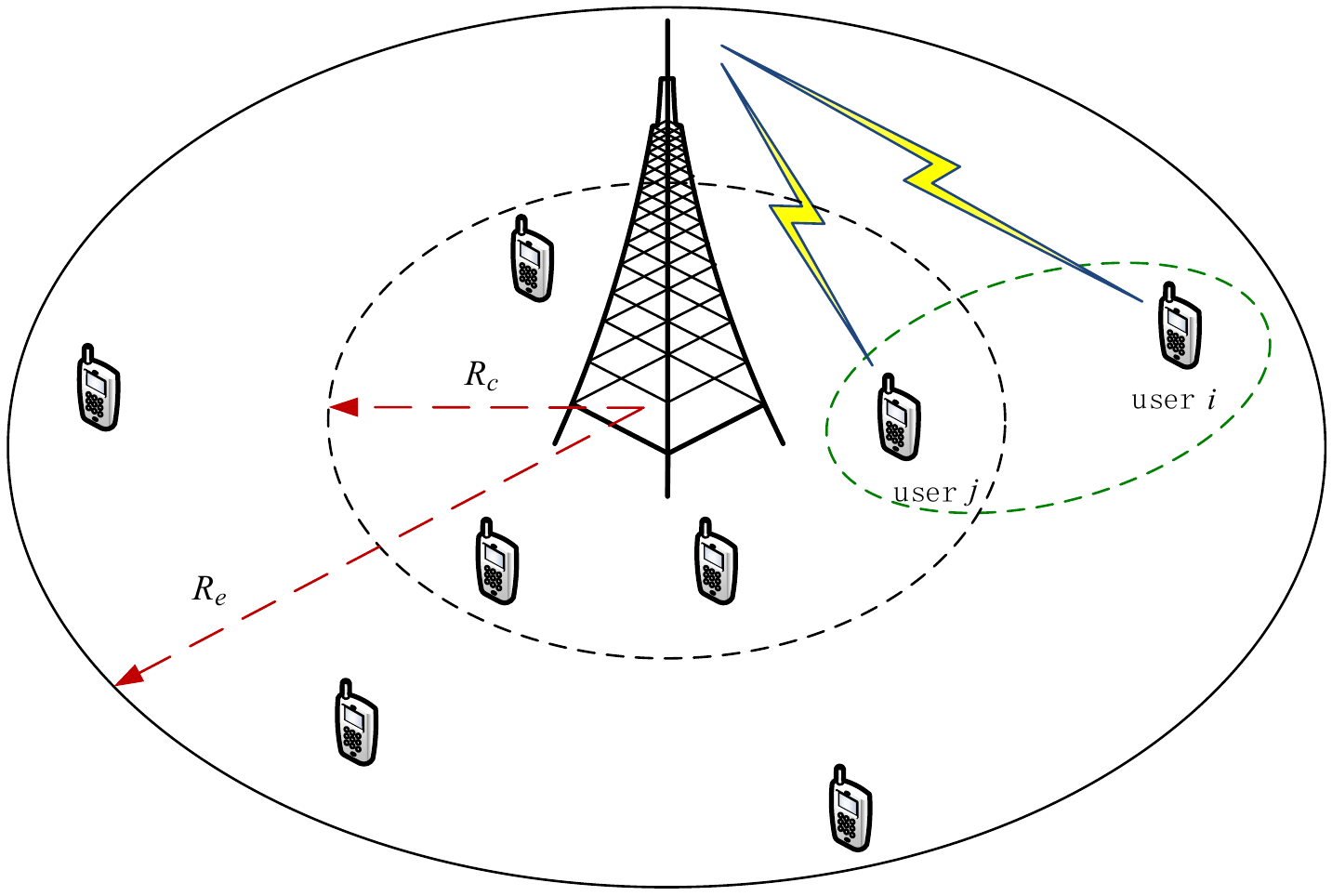}\\
	\caption{An illustration of user paring in a multi-user HARQ-CC enabled NOMA system.} \label{fig:sys_model2}
\end{figure}
Similar to \eqref{eq:averge_power_2}, we define $P_{ij,{\rm avg}}$ as the average transmit power of a potential user pair of use $i$ and user $j$, which satisfies that
\begin{align}
P_{ij,{\rm avg}} = p_{i,1} + p_{j,1} + \sum_{t = 2}^{T} (p_{i,t}+p_{j,t})\hat{\Pout}_{ij,t},
\end{align}
where $\hat{\Pout}_{ij,t} = \Pout_{i,t-1} + \Pout_{j,t-1} - \Pout_{i,t-1} \Pout_{j,t-1}$ denotes the retransmission probability of user $i$ and user $j$ at the $t$-th retransmission round.
Therefore, the average transmit power minimization problem in the multi-user scenario can be formulated as
\begin{subequations}\label{p:power_min_ARQ_mu}
	\begin{align}
	\min_{{\bf P_1},{\bf P_2},{\bf V}} \quad & \sum_{i \in \kset_1}\sum_{j \in \kset_2} v_{ij} P_{ij,{\rm avg}}  \\
	\st ~~~~& \Pout_{i, T}\le \delta_i, ~ \Pout_{j, T}\le \delta_j, \forall i \in \kset_1 , j \in \kset_2,\label{eq:outage_mu1}\\
	& \sum_{j \in \kset_2} v_{ij} \left(p_{i,t} + p_{j,t}\right) \le P_T, ~ \forall i \in \mathcal{K}_1, t, \label{eq:power_mu}\\
	& p_{i,t} \ge 0, ~ p_{j,t} \ge 0, \forall i \in \mathcal{K}_1, j \in \mathcal{K}_2, t, \\
	& \sum_{j \in \kset_2} v_{ij} = 1,  \forall i \in \mathcal{K}_1, \label{eq:paring_mu_1}\\
	& \sum_{i \in \kset_1} v_{ij} = 1,  \forall j \in \mathcal{K}_2, \label{eq:paring_mu_2}\\
	& v_{ij} \in \{0,1\}, \forall i \in \mathcal{K}_1, j \in \mathcal{K}_2\label{eq:paring_mu_3}
	\end{align}
\end{subequations}
where ${\bf P_1} = [p_{i,t}]_{K \times T}$, ${\bf P_2} = [p_{j,t}]_{K \times T}$, ${\bf V} = [v_{i,j}]_{K \times K}$ and $P_T = \frac{P_{\max}}{K}$ denotes the maximum power that can be allocated to each user-pair. \eqref{eq:outage_mu1} describe the outage constraints of users in $\mathcal{K}_1$ and $\mathcal{K}_2$ respectively. \eqref{eq:power_mu} denotes the transmit power constraint on each user-pair and the user-paring constraints in \eqref{eq:paring_mu_1} and \eqref{eq:paring_mu_2} indicate that each CU should be paired with one EU and each EU can only be paired with one CU, respectively.

Problem \eqref{p:power_min_ARQ_mu} is a mixed integer programming problem and is NP-hard in general \cite{Liu2014,Di2016,liang2017}, i.e., it is polynomial time unsolvable.
Hence, to solve problem \eqref{p:power_min_ARQ_mu} efficiently, we propose a suboptimal algorithm with much reduced complexity, where
problem \eqref{p:power_min_ARQ_mu} is decoupled into two subproblems, i.e., user pairing and power allocation. In what follows, we will show how to use matching theory to solve the user paring problem.

\subsection{User Paring with Matching Theory}
To use matching theory to solve the user-pairing problem, we first model the user pairing problem as a two-sided one-to-one matching problem, in which each CU from $\kset_1$ can match with one EU from $\kset_2$ and the users in the same user pair can share the same resource to improve their performance with NOMA transmission. More specifically, we give the the definition of two-sided one-to-one matching as below:

{\emph{Definition 1}}:  \emph{(Two-sided One-to-One Matching)} Consider CUs and EUs as two disjoint sets, $ \kset_1=\left\{ {1, 2, \cdots, K} \right\}$ and $ \kset_2=\left\{ {K+1, K+2, \cdots, 2K} \right\}$, we call the the matching $\mathcal{M}(\cdot)$ between $\kset_1$ and $\kset_2$ as a \emph{two-sided one-to-one matching} if the following conditions are satisfied:
\begin{enumerate}
	\item  $\mathcal{ M}({\rm CU}_i) \subseteq {\kset_2} $, $ \forall i \subseteq \kset_1$, $\mathcal{M}({\rm EU}_j)\in {\kset_1} $,  $\forall j \in \kset_2$;
	\item  $|\mathcal{M}({\rm CU}_i)|=1 $, $\forall i \in \kset_1$, $|\mathcal{M}({\rm EU}_j)|=1 $, $\forall j \in \kset_2$;
	\item  $\mathcal{M}({\rm CU}_i) = {\rm EU}_j $ if and only if $\mathcal{M}({\rm EU}_j) = {\rm CU}_i$,
\end{enumerate}
Constraint 1) implies that each CU is matched with an EU in $\kset_2$ and each EU is matched with a CU in $\kset_1$; Constraint 2) states that each EU can be matched with only one CU and vice versa; Constraint 3) represents that if CU $i$ is matched with EU $j$, then EU $j$ should be matched with CU $i$.

In the proposed matching process, without loss of generality, we assume that the CUs act as proposers and the EUs are selectors. At the beginning of matching, each CU proposes to match with an EU according to its preference list which is built based on the consumed transmit power to satisfy the outage requirements of users {{by solving the power allocation problem}},
and can be described as follows:
\begin{subequations}
	\begin{align}
	&{\rm CU}_{PF}(i)=[{\rm EU}_{\kset_2}(1), {\rm EU}_{\kset_2}(2), \dots, {\rm EU}_{\kset_2}(j),\dots, {\rm EU}_{\kset_2}(K)], ~\forall i \in \kset_1,
	\label{eq: preferece_CUi}\\
	&{\rm EU}_{PF}(j)=[{\rm CU}_{\kset_1}(1), {\rm CU}_{\kset_1}(2), \dots, {\rm CU}_{\kset_1}(i),\dots, {\rm CU}_{\kset_1}(K)], ~\forall j \in \kset_2.
	\label{eq:preference_EUj}
	\end{align}
\end{subequations}
where the lists are sorted in an increasing order of the consumed power.
The preference lists satisfy the condition that the user who can provide the minimum consumed power in the opposite set will be ranked as the first one. This can be described as
\begin{subequations}
	\begin{align}
	{\rm CU}_{\kset_1}(i)_j &\succ {\rm CU}_{\kset_2}(i)_{j'},  \label{eq:CU}\\
	{\rm EU}_{\kset_2}(j)_i &\succ {\rm EU}_{\kset_2}(j)_{i'},\label{eq:EU}
	\end{align}
\end{subequations}
for all $i$, $j$ in $\kset_1$ and $\kset_2$ respectively. This means that CU $i$ prefers to match with EU $j$ to EU $j'$ if the total consumed power of user pair $\left<{\rm CU}_i, {\rm EU}_j\right>$ is less than than that of the user pair $\left<{\rm CU}_i, {\rm EU}_{j'}\right>$. Similarly, \eqref{eq:EU} describes that ${\rm EU}_j$ prefers ${\rm CU}_i$ to ${\rm CU}_{i'}$ if the user pair $\left<{\rm CU}_i, {\rm EU}_j\right>$ consumes less power than the user pair $\left<{\rm CU}_{i'}, {\rm EU}_j\right>$.

%
%

To better handle the minimization of the total transmit power, we introduce the swap operation between two CUs, e.g., ${\rm CU}_i$ and ${\rm CU}_{i'}$, in $\kset_1$ and the corresponding two matched EUs, e.g., $\mathcal{M}({\rm CU}_{i'})$ and $\mathcal{M}({\rm CU}_{i'})$, in $\kset_2$ in the matching process. The swap matching can be defined by
\begin{align}\label{swap operation}
	\mathcal{M}_i^{i'}=&\mathcal{M} \backslash \{\left<{\rm CU}_i,\mathcal{M}({\rm CU}_i)\right>, \left<{\rm CU}_{i'},\mathcal{M}({\rm CU}_{i'})\right>\}~ \cup \nonumber \\
	&\{\left<{\rm CU}_i,\mathcal{M}({\rm CU}_{i'})\right>, \left<{\rm CU}_{i'},\mathcal{M}({\rm CU}_{i})\right>\}
\end{align}
where ${\rm CU}_i$ and ${\rm CU}_{i'}$ switch the matched EUs while keeping other user pairs in the matching scheme invariant. Based on the swap operation, we define the swap-blocking pair as below:

{\emph{Definition 2}}: \emph{(Swap-Blocking Pair)}  Given a matching ${\mathcal M}$ and two user-pairs $\left<{\rm CU}_i, {\rm EU}_j\right>$ and  $\left<{\rm CU}_{i^{\prime}}, {\rm EU}_{j^{\prime}}\right>$.
If there exists a swap matching $\mathcal{M}_i^{i'} $ such that the 
total transmit power of the new user pairs gets a decrease, 
then the swap operation is approved, and $\left<{\rm CU}_i, {\rm EU}_{j'}\right>$, $\left<{\rm CU}_{i'}, {\rm EU}_{j}\right>$ are swap blocking pairs under the matching $\mathcal{M}_i^{i'}$.

Note that the above definition implies that there is a benefit by exchanging the matching of user-pairs $\left<{\rm CU}_i, {\rm EU}_{j}\right>$, $\left<{\rm CU}_{i'}, {\rm EU}_{j'}\right>$ and this operation will not hurt the benefit of the other user-pairs. Thus, with this new matching strategy, the total transmit power of the system can be decreased.
Based on the above definitions, the matching behaviour of the users can be described as follows. Firstly, every two user-pairs can be arranged by the BS to form a potential swap blocking pair. Then the BS will check whether these two user pairs can get a benefit by exchanging their matches. The users will keep performing approved swap operations until they reach to a stable status, which is also known as \emph{two-side exchange stable} matching. Its definition is described as below.

{\emph{Definition 3}}: {\emph{(Two-Side Exchange Stable) \cite{roth_sotomayor_1990}:}}  If a matching $\mathcal{M}$ is not blocked by any swap-blocking pair, then $\mathcal{ M}$ is a \emph{two-side exchange stable} matching.

In the following subsections, according to the above definitions, we will proposed a swap-matching based algorithm to solve the user paring problem and the performance of the proposed algorithm will be analyzed.

\begin{algorithm}[!t] \small
	\caption{Swap-matching based algorithm for user-pairing }\label{alg:matching}
	\begin{algorithmic}[1]
		\STATE Initialize $ {\mathcal{S_{\rm UNMATCH}}=\left\{1, 2, \cdots, N\right\}} $,  $\mathcal{M_{\rm MATCH}} = \phi$ and  $\mathcal{M} = \phi$.
		\STATE Initialize preference lists for CUs and EUs as $ {\rm CU}_{PF}(i), i \in \kset_1$ and ${\rm EU}_{PF}(j), j \in \kset_2$, respectively.\\
		\vspace{1mm}
		{\setlength\parindent{-1.1em} \bf Initial Matching Phase: }\vspace{1mm}
		\FOR{$i=1$ to $K$}
		\STATE Each ${\rm CU}_i$ sends matching request to its most preferred EU $j$ according to ${\rm CU}_{PF}(i)$.
		\IF {${\rm EU}_{j}$ has not been matched with any CU}
		\STATE Add user pair $\left<{\rm CU}_i, {\rm EU}_{j}\right>$ to the existing matching scheme $\mathcal{M_{\rm MATCH}}$. Remove ${\rm CU}_i$ from $\left\{ {\mathcal{S_{\rm UNMATCH}}} \right\}$.
		\ELSE
		\STATE ${\rm CU}_i$ will be matched with the unmatched ${\rm EU}_{j^{\prime}}$ according to ${\rm CU}_{PF}(i)$. \\
		\ENDIF
		\ENDFOR\\\vspace{1mm}
		{\setlength\parindent{-1.5em} \bf Swap Matching Phase: }\vspace{1mm}
		\REPEAT
		\STATE Search the matched pairs to check whether there exists swap-blocking pair.
		\IF {there is swap blocking pair}
		\STATE Swap the user pair and update the current matching scheme.
		\ENDIF
		\UNTIL There is no swap-blocking pair in the matching.
	\end{algorithmic}
\end{algorithm}

\subsection{Swap Matching based Algorithm Description}
In this subsection, we proposed a swap operations enabled matching algorithm for user pairing problem. We first initialize the preference lists for each user according to the power consumption under outage constraints and then design the swap operations enabled matching step to minimize the total transmit power. 
At the initialization step, as shown in Step 1 of Alg. \ref{alg:matching}, the set $\mathcal{S_{\rm UNMATCH}}$ is defined to record CUs who have not been matched with any EUs, and can be initialized as  $\mathcal{S_{\rm UNMATCH}}=\{{\rm CU}_1,{\rm CU}_2,\cdots, {\rm CU}_N\}$.
We also define the set $\mathcal{S_{\rm MATCH}}$ to record the matched CUs and we initialize $\mathcal{S_{\rm MATCH}}$ with an empty set before matching process. After the initialization of the matching process, the initial matching operations will conduct to give each CU an EU to form a pair.
The main idea of this process is that each CU can propose to the EUs according to the preference lists.
The CU will first propose to its most preferred EU. If the EU has not been matched with any CU, these users will be paired together. Otherwise, the CU find another unmatched EU that can provide the minimum transmit power. Finally, all CUs and EUs are paired.
Then to further minimize the total transmit power, a swap matching process will be performed. In particular, the BS will iteratively check whether there are swap blocking pairs in the current matching.
 The matching process will terminate when there is no swap-blocking pair in the current matching.

\subsection{Performance Analysis of Algorithm \ref{alg:matching}}

In this subsection, we discuss the stability, convergence and computational complexity of  proposed swap matching based user paring algorithm.
\subsubsection{Stability and Convergence}
On the stability and convergence of Algorithm \ref{alg:matching}, we have the following proposition.
\begin{prop}
	{\bf (a)} If algorithm \ref{alg:matching} converges to a matching $\mathcal{ M}^{\star}$, then $\mathcal{ M}^{\star}$ is a two-side-stable matching.
	
	{\bf (b)} Algorithm \ref{alg:matching} converges to a two-side-stable matching after a limited number of swap operations.
	\end{prop}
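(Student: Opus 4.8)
The plan is to prove the two parts of the proposition separately, as part (a) establishes stability of any limit point while part (b) guarantees that such a limit point is reached in finite time. Both arguments rest on the fact, already built into Algorithm \ref{alg:matching}, that a swap operation is executed only when it strictly decreases the total transmit power (recall Definition 2: a swap-blocking pair exists precisely when the swap yields a strict decrease without harming the other pairs).

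For part (a), I would argue by contradiction. Suppose Algorithm \ref{alg:matching} converges to a matching $\mathcal{M}^{\star}$ but that $\mathcal{M}^{\star}$ is \emph{not} two-side exchange stable. By Definition 3, this means there exists at least one swap-blocking pair under $\mathcal{M}^{\star}$, i.e., two user-pairs $\left<{\rm CU}_i,{\rm EU}_j\right>$ and $\left<{\rm CU}_{i'},{\rm EU}_{j'}\right>$ for which the swap matching $\mathcal{M}_i^{i'}$ strictly reduces the total transmit power. But the Swap Matching Phase of Algorithm \ref{alg:matching} terminates only when no swap-blocking pair can be found; hence if such a pair existed, the algorithm would perform one more approved swap and would not have converged at $\mathcal{M}^{\star}$. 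This contradicts the assumption that $\mathcal{M}^{\star}$ is the converged matching, so $\mathcal{M}^{\star}$ must be two-side exchange stable.

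For part (b), the key is a monotonicity-plus-finiteness argument. I would define the total transmit power of a matching $\mathcal{M}$ as the potential function $\Phi(\mathcal{M}) = \sum_{\left<{\rm CU}_i,{\rm EU}_j\right>\in\mathcal{M}} P_{ij,{\rm avg}}$, where $P_{ij,{\rm avg}}$ is the per-pair average power obtained by solving the power allocation problem. Every approved swap operation, by the very definition of a swap-blocking pair, strictly decreases $\Phi$; thus the sequence of matchings produced during the Swap Matching Phase yields a strictly decreasing sequence of values $\Phi(\mathcal{M}^{(0)}) > \Phi(\mathcal{M}^{(1)}) > \cdots$. Since each matching is a permutation pairing the $K$ CUs with the $K$ EUs, the number of distinct one-to-one matchings is finite (at most $K!$), and therefore the number of distinct attainable values of $\Phi$ is finite. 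A strictly decreasing sequence drawn from a finite set cannot revisit any value and must terminate after finitely many steps; hence the algorithm reaches a matching admitting no further swap-blocking pair after a limited number of swap operations. Combined with part (a), this limiting matching is two-side exchange stable.

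The main obstacle I anticipate is not the counting argument itself but justifying that the potential $\Phi$ is well-defined and strictly decreases at each swap. This requires that the per-pair power $P_{ij,{\rm avg}}$ be uniquely determined by the pairing (so that $\Phi$ is a genuine function of $\mathcal{M}$) and that the swap affect \emph{only} the two pairs involved, leaving all other pairs' contributions to $\Phi$ unchanged---a consequence of the orthogonal-resource assumption that eliminates inter-pair interference. I would therefore make explicit that, because distinct pairs occupy distinct resource blocks, the power consumed by any pair depends solely on the identities of its own two users, so swapping two pairs changes $\Phi$ by exactly the difference between the old and new pair powers, which is strictly negative precisely when a swap-blocking pair is invoked. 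Once this decoupling is stated, the strict-decrease and finiteness facts close the argument cleanly.
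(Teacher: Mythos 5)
Your proof is correct and follows essentially the same route as the paper: part (a) is the same termination-condition argument (the paper states it directly, you phrase it as a contradiction), and part (b) is the same monotone-decrease-plus-finiteness argument, including the same observation that a swap leaves all non-involved pairs' power unchanged. If anything, your part (b) is slightly tighter: the paper appeals to the total power being lower-bounded together with the finiteness of potential swap-blocking pairs, whereas your observation that a strictly decreasing sequence drawn from the finite set of at most $K!$ matching values cannot repeat closes the argument cleanly (a lower bound alone would not rule out infinitely many ever-smaller decreases).
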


\begin{proof}
	{\bf (a)}:  As stated in Algorithm \ref{alg:matching}, if it reaches to a ultimate matching $\mathcal{M}^*$, then there is no user pair can improve the performance and meanwhile doesn't hurt the performance of the other user pairs by exchanging matching. This implies that the current matching is the best matching according to Algorithm \ref{alg:matching} and there is no blocking pair. Thus the ultimate matching $\mathcal{ M}^*$ is a two-side-stable matching.
	
	{\bf (b)}: The convergence of Algorithm \ref{alg:matching} is determined by the swap-matching phase. Assume that the matching process changes with the swap operations as follow:
	\begin{align}
	\mathcal{ M}_0 \rightarrow \mathcal{ M}_1 \rightarrow \cdots \mathcal{ M}_\ell \rightarrow \cdots.
	\end{align}
	After swap matching $\ell$, the matching changes from $\mathcal{ M}_{\ell-1} \rightarrow \mathcal{ M}_\ell$.  According to \emph{Definition} 2, after each swap operation, at least one of the two related user-pairs get a decrease of the transmit power and the power consumption of the other user-pairs keep the same. Therefore, the total transmit power decreases after each swap operation. Notice
	that the total potential swap-blocking is finite since the number of users is limited, and also the total average power is lower-bounded due to the outage requirements of users. Thus there exists a swap operation $\ell^*$, after which the matching process will terminate and the total average power consumption stops decreasing. So Algorithm \ref{alg:matching} can be guaranteed to converge after several iterations. This completes the proof.
	\hfill $\blacksquare$	
	\end{proof}

\subsubsection{Computational Complexity}
The computational complexity of algorithm \ref{alg:matching} is consisted of two parts. One is from the  initial-matching phase to give an initial matching to all users, which has a complexity order of $\mathcal{O}(K)$. The other is due to the swap-matching operations in the swap-matching phase. For each ${\rm CU}_i$, there exist $K-1$ possible ${\rm EU}$s to do swapping, thus the complexity order is given by $\mathcal{O}(K(K-1))$.  Therefore, the total complexity is $\mathcal{O}(K^2)$. 
Compared to the optimal strategy using exhaustive search, which has a complexity order of $\mathcal{O}(K!)$, the computational complexity of the proposed swap-matching based algorithm is dramatically decreased.

\subsection{Solving Problem \eqref{p:power_min_ARQ_mu}}
In the above subsections, a swap-matching based algorithm is proposed to solve the user paring problem. As a result, the $2K$ users are grouped into $K$ user pairs, and thus problem \eqref{p:power_min_ARQ_mu} can be decoupled into $K$ subproblems. For each subproblem, it reduces to the power allocation problem with given user pairing. So Algorithm \ref{alg:sca} can be adopted to solve the subproblems in a parallel manner and thus problem \eqref{p:power_min_ARQ_mu} is solved.
In the following section, we will evaluate the performance of the proposed algorithms with numerical simulations.

\begin{figure}[!t]
	\centering
	\includegraphics[width=0.68\linewidth]{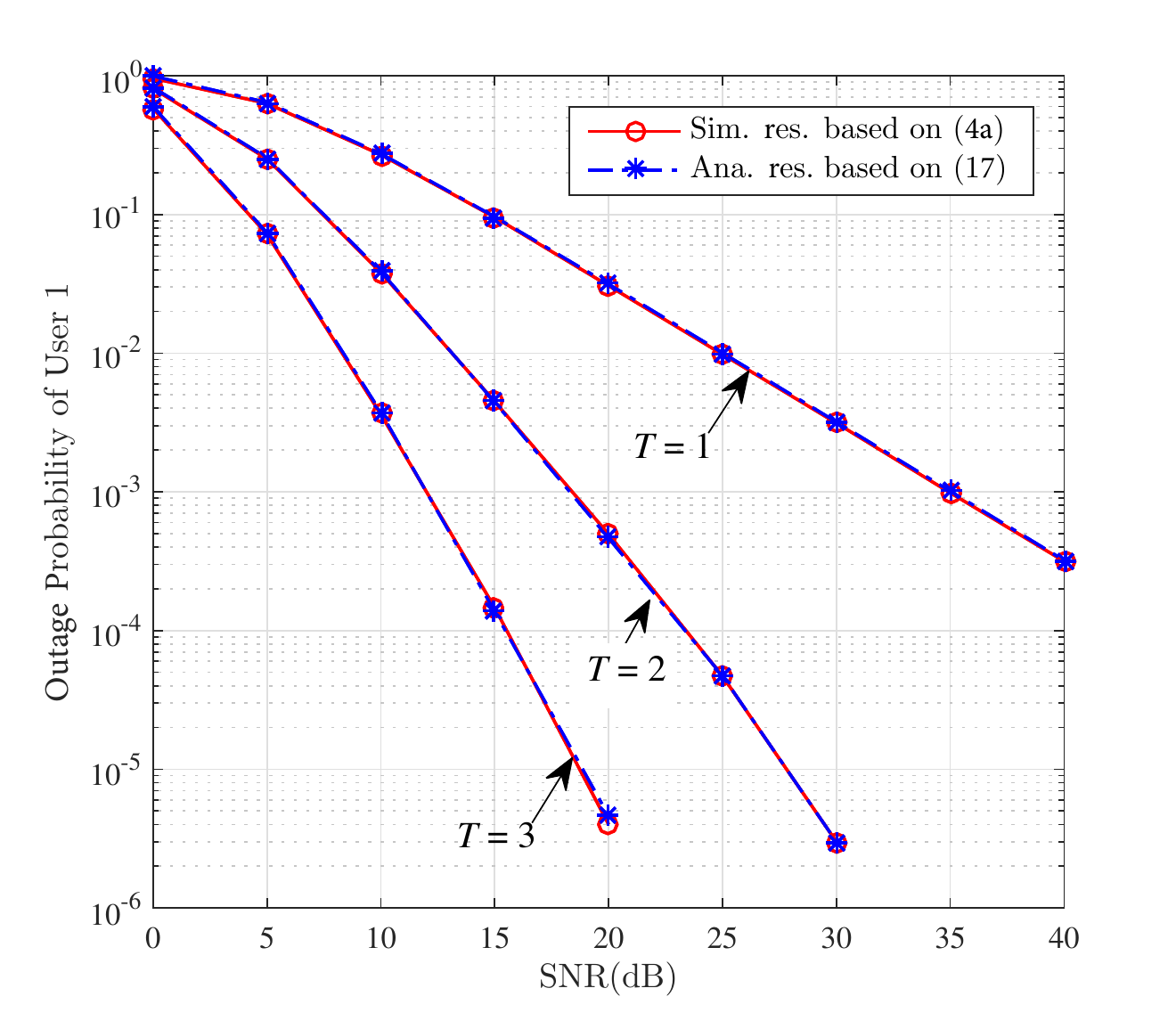}\\
	\caption{Outage probabilities of user 1 with $\frac{p_1}{p_2} = \frac{3}{2}$ and $d_1 = 10$ m.} \label{fig:outage_user1}
\end{figure}

\begin{figure}[!t]
	\centering
	\includegraphics[width=0.68\linewidth]{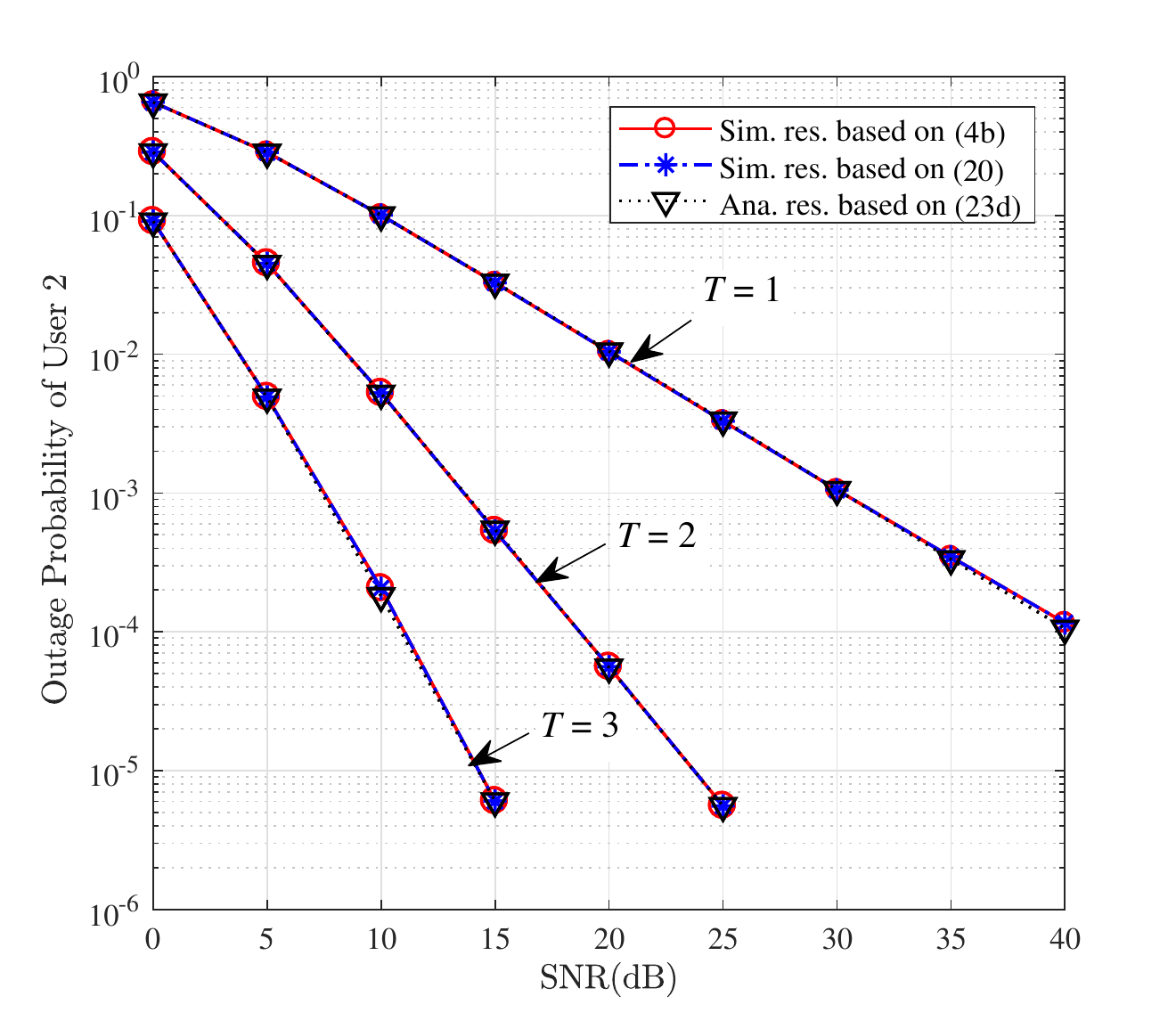}\\
	\caption{Outage probabilities of user 2 with $\frac{p_1}{p_2} = \frac{3}{2}$, $d_1 = 10$ m and $d_2 = 4$ m.} \label{fig:outage_user2}
\end{figure}

\section{Simulation Results} \label{sec:simulation}
	In this section, simulations are conducted to examine the efficacy of the proposed HARQ-CC enabled NOMA transmission scheme and algorithms.  Without loss of generality, in the simulations, we assume the users have the same outage probability requirements, i.e., $\delta_i = \delta_j= \delta,~ \forall i \in \kset_1, \forall j \in \kset_2$, the target SNRs of users are given by $\gamma_i = 0.2$ and $\gamma_j = 1,~\forall i \in \kset_1, \forall j \in \kset_2 $ \cite{cui-spl-2016}.
	The users are assumed to be randomly deployed in a circle region with $R_c = 4$ m and $R_e = 10$ m, and the path loss exponent is set to be $\alpha = 2$.
	The noise power is set to be $\sigma_1^2 = \sigma_2^2 = 0.1$.
	For the initialization in Algorithm \ref{alg:sca}, we simply set $p_{1,t}^{(0)} = 0.7 P_{\max}$ and $p_{2,t}^{(0)} = 0.3 P_{\max},~\forall t$. Then a feasible point of $\left\{x_{m,t}^{(0)},y_t^{(0)},{z}_t^{(0)}\right\}$ can be obtained based on \eqref{eq:CoV}.
	
\subsection{Accuracy of the Derived Outage Probability Expressions}	

We compare the analytical and simulation results of the outage performance 
in Fig. \ref{fig:outage_user1}, where the solid curves are simulation results based on \eqref{eq:outage_1}, and the dashed curves are the analytical results based on \eqref{eq:ana_outage_user1}. It can be seen that \eqref{eq:outage_1} can be tightly approximated by \eqref{eq:ana_outage_user1} even in the low SNR region and small number of retransmissions.
The outage performance of the analytical results of user 2 is given in Fig. \ref{fig:outage_user2},
where the solid and dashed curves are simulation results based on \eqref{eq:outage_2} and \eqref{eq:outage_approx_eq1} respectively, and the dotted curves are the analytical results based on \eqref{eq:outage_probability_user2}. One can observe that the approximation in \eqref{eq:outage_approx_eq1} and \eqref{eq:outage_probability_user2} are sufficiently tight in the HARQ-CC regime both in low and high SNR regions. It can be seen that  the outage probability characterized by \eqref{eq:outage_approx_eq1} is almost the same as that by \eqref{eq:outage_2} even in the low SNR region. Moreover the outage probability in \eqref{eq:outage_approx_eq1} can be conservatively approximated by the analysis expression in \eqref{eq:outage_probability_user2}.

\subsection{Performance Evaluation for Two-User Case}
To evaluate the performance of the proposed HARQ-CC enabled NOMA transmission scheme and the proposed adaptive power allocation (APA) scheme in Algorithm \ref{alg:sca}, three different schemes are considered as benchmarks and their rationales are described as follows:
\begin{itemize}
	\item \emph{NOMA with Exhaustive Search:} it yields the optimal performance by exhaustively search over all possible power allocation combinations of problem \eqref{p:power_min_ARQ} with the derived outage probabilities of \eqref{eq:ana_outage_user1} and \eqref{eq:outage_probability_user2}. The accuracy of the solutions can be adjusted when searching the transmit power and should be coincide with the other transmission schemes.
	\item \emph{NOMA with Equal Power Allocation (EPA):} the two users are scheduled by using NOMA and the transmit power in different retransmission rounds are equal.
	\item \emph{OMA with FDMA:} the whole frequent bandwidth are divided into two equal parts and allocated to the two users individually.
\end{itemize}

	\begin{figure}[!tp]\centering
	\includegraphics[width=0.68\linewidth]{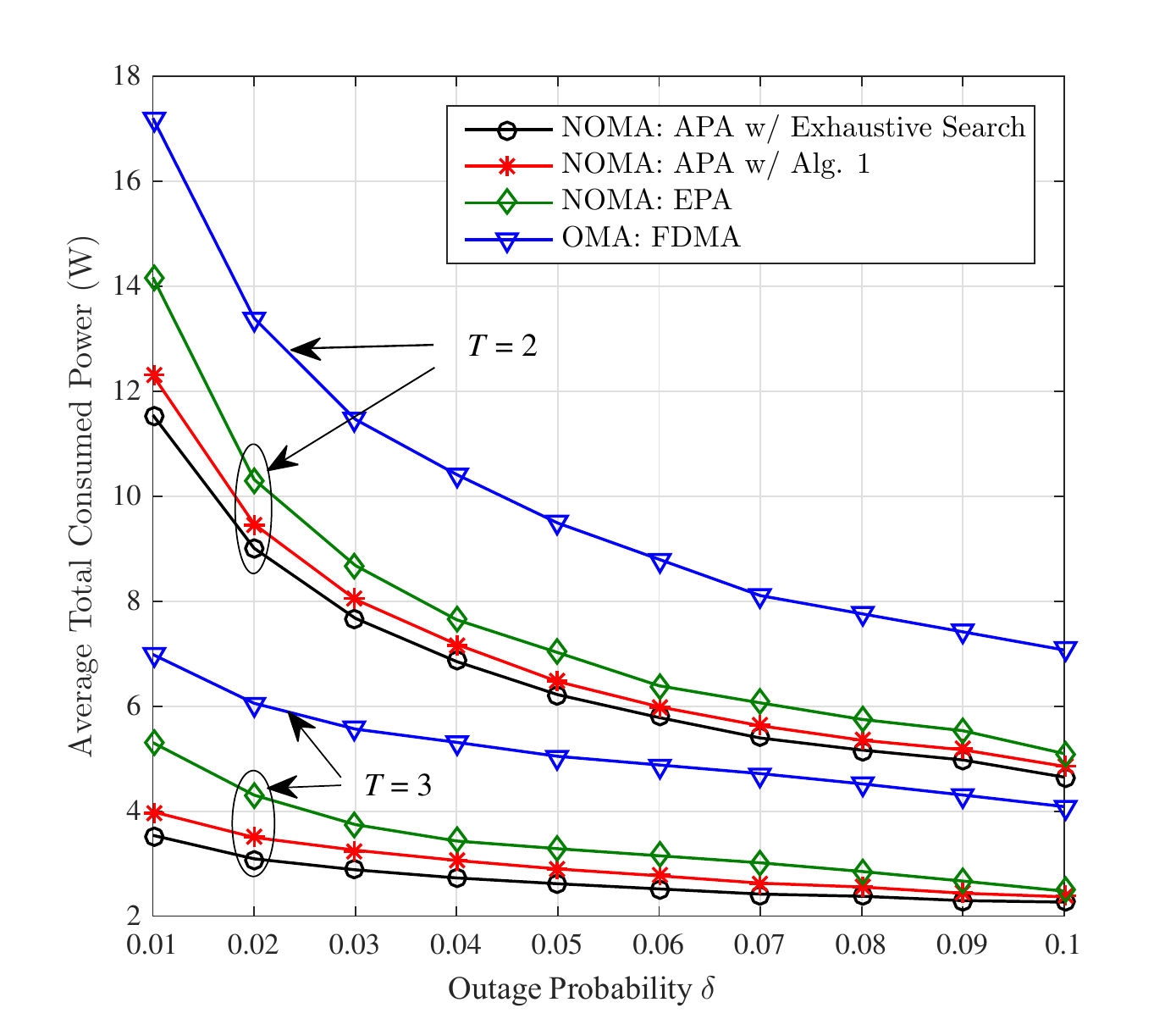}
	\caption{The total average power versus different outage requirements with $P_{\max} = 40$ Watt.} \label{fig:power_outage}
    \end{figure}

	To verify the relationship between the total average transmit power and the outage requirements of users, Fig. \ref{fig:power_outage} simulates the total average transmit power with different $\delta$'s under different maximum transmission rounds. It can be seen that the NOMA scheme can outperform the EPA and FDMA schemes under different $\delta$ and $T$.
	One can also observe that the performance of the proposed approximation approach and the successive approximation based algorithm is closed to the optimal performance attained by the exhaustive search approach.
	Moreover, it can also be seen that the average transmit power decreases with the increase of the tolerable outage probability (less strict outage requirements) of users, which indicates that retransmission would bring more performance improvement when users have more strict outage requirements.
	Furthermore, the power consumption gap between the equal power allocation scheme and the optimal one decreases with an increase of the tolerable outage probability, implying that it is more efficient to use more sophisticated when users have more strict outage requirements.
	
	
	\begin{figure}[!tp]\centering
		\includegraphics[width=0.68\linewidth]{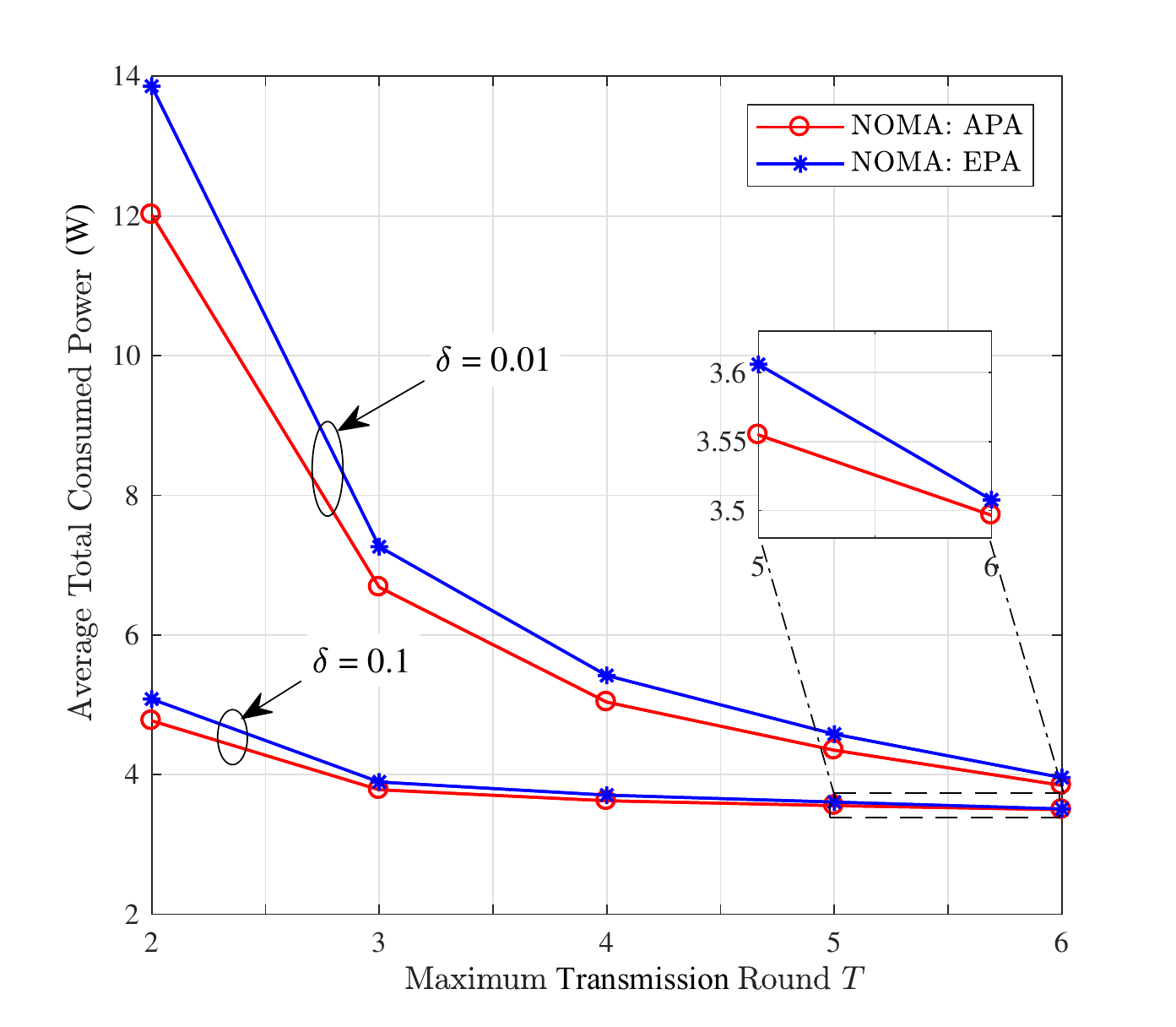}
		\caption{The total average power versus different outage requirements with $P_{\max} = 40$ Watt.} \label{fig:power_T}
	\end{figure}
In Fig. \ref{fig:power_T}, the relationship between the total average transmit power and the maximum retransmission round is investigated, and the EPA scheme is used as the benchmark. From Fig. \ref{fig:power_outage}, one can observe that the transmit power decreases with the increase of $T$, and the gap between APA and EPA schemes decreases with the increase of $T$. We can also observe that when the users have strict outage requirements e.g., $\delta = 0.01$, the total transmit power decreases quickly as increasing $T$.

\subsection{Performance Evaluation for Multi-User Case}

\begin{figure}[!tp]\centering
	\includegraphics[width=0.68\linewidth]{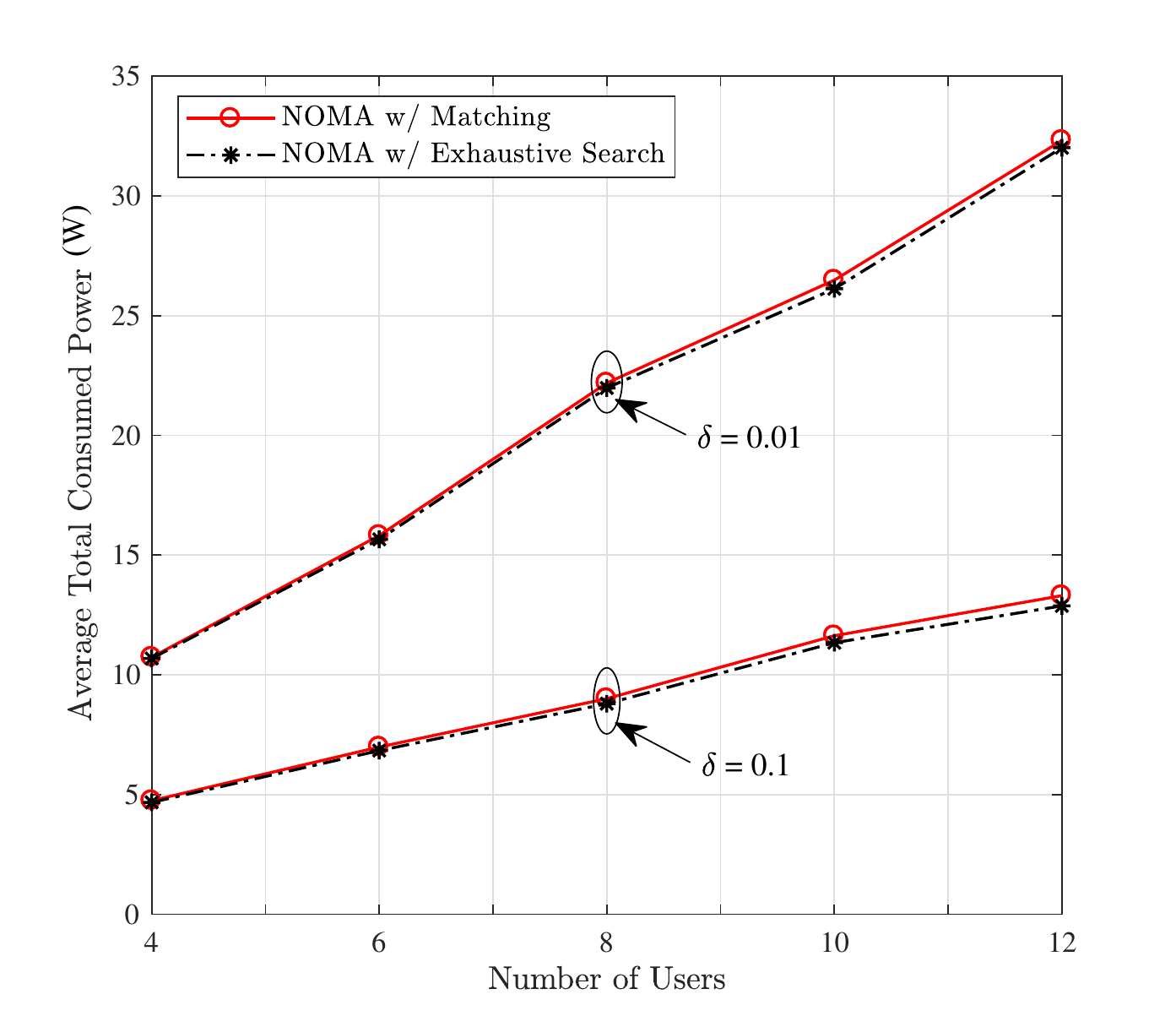}
	\caption{The total average power versus number of users in the system with $T = 3$ and $P_{\max} = 40$ Watt.} \label{fig:power_user}
\end{figure}
In this subsection, the performance of the proposed matching algorithm is examined, and the scheme by searching all the possible user-paring combinations will be served as the benchmark. In Fig. \ref{fig:power_user}, the performance of the matching algorithm is simulated with different number of users in the system by averaging $100$ user location realizations. From Fig. \ref{fig:power_user}, we can observe that when the numbers of users is small, e.g., $4$, the performance of the matching based algorithm is very close to the optimal performance by the exhaustive search scheme for different outage requirements; otherwise, the the matching algorithm will get a performance loss. But it is insignificant, as for example, when $K = 12$, the performance loss is only $0.9\%$ for $\delta = 0.01$ and $2.6\%$ for $\delta = 0.1$.

\begin{figure}[!tp]\centering
	\includegraphics[width=0.68\linewidth]{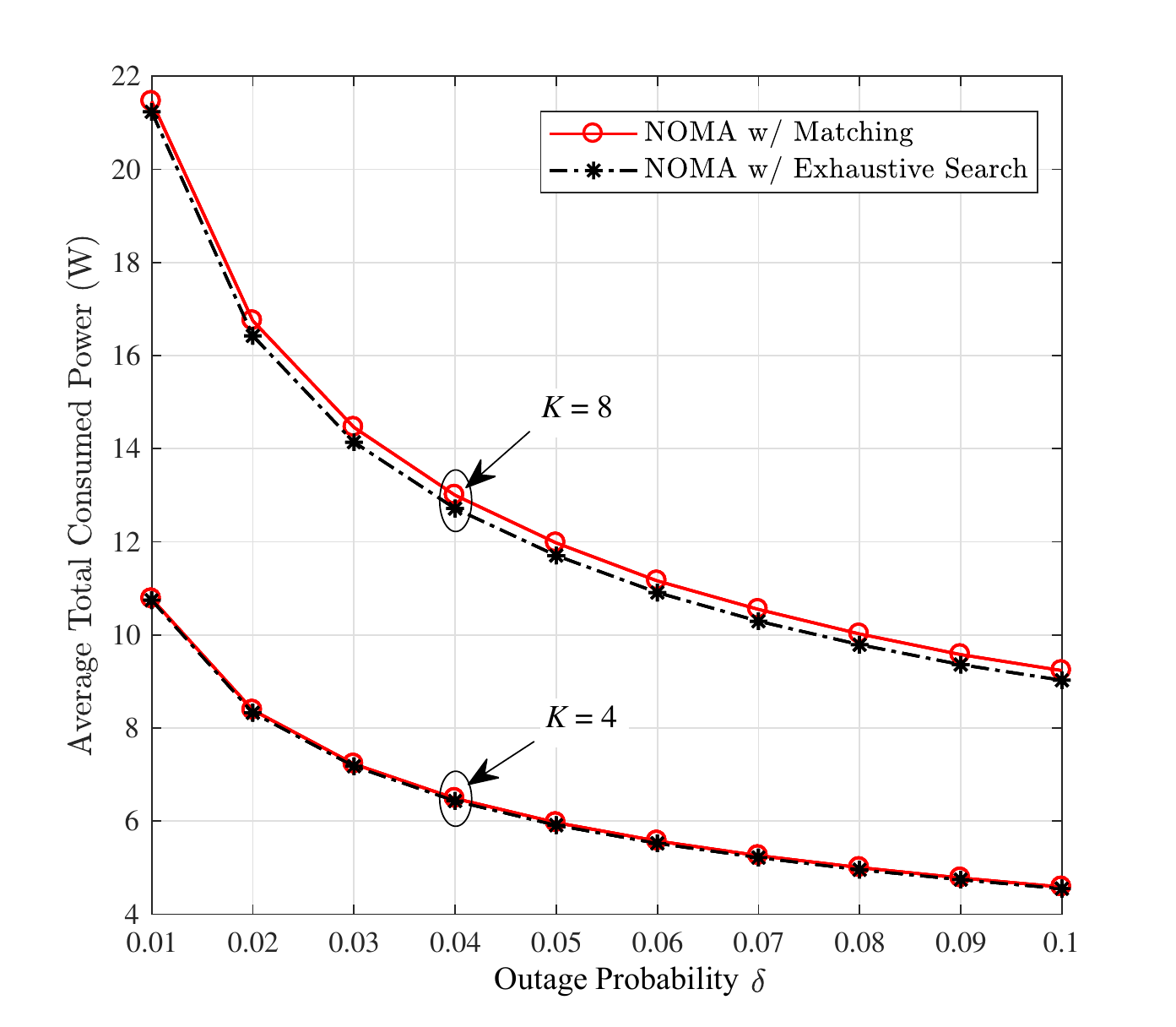}
	\caption{Total average power consumption versus outage requirement of users with $T = 3$ and $P_{\max} = 40$ Watt.} \label{fig:power_outage_mu}
\end{figure}
This can be seen in Fig. \ref{fig:power_outage_mu} more clearly where the performance of the proposed matching scheme under different outage requirements is also investigated. From Fig. \ref{fig:power_outage_mu}, we can see that for different outage requirements the matching scheme can always achieve the near-optimality performance, comparing with the optimal exhaustive search based algorithm.

\section{Conclusions}\label{sec:conclusion}
In this paper, we have considered the adaptive power allocation problems in HARQ-CC enabled NOMA systems subject to the outage probability constraints of users.
In particular, we have carefully derived the outage probabilities of the users to transform the original problem into an explicit one.
Furthermore, by properly approximating the formulated problem, a successive approximation based algorithm was proposed to iteratively solve the problem.
As a more practical scenario, the multi-user scenario was also investigated. The corresponding total average power minimization problem is boiled down to the joint design of user-paring and power allocation problem.
With the aid of matching theory, a suboptimal but with high-quality solution and low computational complexity algorithm has been presented to first solve the user-paring problem. Then it was shown that the remaining power allocation problem could be solved by the successive approximation based algorithm.
The simulation results have shown that the proposed approximation approach and the successive approximation based algorithm could achieve the near-optimal performance.
In addition, the proposed adaptive power allocation scheme can significantly outperform the equal power allocation scheme when users have strictly outage requirements.
Also, the matching based algorithm has been shown to have good accuracy compared with the optimal exhaustive searching method.

{
	\smaller[1]

}


\begin{thebibliography}{10}
	\providecommand{\url}[1]{#1}
	\csname url@samestyle\endcsname
	\providecommand{\newblock}{\relax}
	\providecommand{\bibinfo}[2]{#2}
	\providecommand{\BIBentrySTDinterwordspacing}{\spaceskip=0pt\relax}
	\providecommand{\BIBentryALTinterwordstretchfactor}{4}
	\providecommand{\BIBentryALTinterwordspacing}{\spaceskip=\fontdimen2\font plus
		\BIBentryALTinterwordstretchfactor\fontdimen3\font minus
		\fontdimen4\font\relax}
	\providecommand{\BIBforeignlanguage}[2]{{%
			\expandafter\ifx\csname l@#1\endcsname\relax
			\typeout{** WARNING: IEEEtran.bst: No hyphenation pattern has been}%
			\typeout{** loaded for the language `#1'. Using the pattern for}%
			\typeout{** the default language instead.}%
			\else
			\language=\csname l@#1\endcsname
			\fi
			#2}}
	\providecommand{\BIBdecl}{\relax}
	\BIBdecl
	
	\bibitem{xu-globecom-2018}
	Y.~Xu, D.~Cai, F.~Fang, Z.~Ding, C.~Shen, and G.~Zhu, ``Outage analysis and
	power allocation for {HARQ-CC} enabled {NOMA} downlink transmission,''
	\emph{in proc. IEEE GlobeCom.}, 2018, Abu Dhabi, UAE.
	
	\bibitem{3GPP-HARQ}
	3GPP, ``{HARQ} and scheduling timing design for {LTE} short {TTI},'' {3rd
		Generation Partnership Project (3GPP)}, Technical Specification, Aug. 2016.
	
	\bibitem{3GPP-URLLC2}
	------, \emph{Discussion on PDSCH related techniques for URLLC}.\hskip 1em plus
	0.5em minus 0.4em\relax TSG RAN WG1 Meeting \#92, R1-1801776, Mar. 2018.
	Available: http://www.3gpp.org/ftp/Meetings\_3GPP\_SYNC/RAN1/Docs/.
	
	\bibitem{Chase1973}
	D.~Chase, ``A combined coding and modulation approach for communication over
	dispersive channels,'' \emph{IEEE Trans. Commu.}, vol.~21, no.~3, pp.
	159--174, Mar. 1973.
	
	\bibitem{IR1974}
	D.~Mandelbaum, ``An adaptive-feedback coding scheme using incremental
	redundancy,'' \emph{IEEE Trans. Inf. Theory}, vol.~20, pp. 388--389, May
	1974.
	
	\bibitem{su-tcom-2011}
	W.~Su, S.~Lee, D.~A. Pados, and J.~D. Matyjas, ``Optimal power assignment for
	minimizing the average total transmission power in hybrid-{ARQ} {R}ayleigh
	fading links,'' \emph{IEEE Trans. Commu.}, vol.~59, no.~7, pp. 1867--1877,
	Jul. 2011.
	
	\bibitem{larsson-tcomm-2013}
	T.~V.~K. Chaitanya and E.~G. Larsson, ``Optimal power allocation for hybrid
	{ARQ} with chase combining in i.i.d. {R}ayleigh fading channels,'' \emph{IEEE
		Trans. Commu.}, vol.~61, no.~5, pp. 1835--1846, May 2013.
	
	\bibitem{Chaitanya2011}
	------, ``Outage-optimal power allocation for hybrid {ARQ }with incremental
	redundancy,'' \emph{IEEE Trans. Wireless Commu.}, vol.~10, no.~7, pp.
	2069--2074, Jul. 2011.
	
	\bibitem{Dechene2014}
	D.~J. Dechene and A.~Shami, ``Energy-aware resource allocation strategies for
	{LTE} uplink with synchronous {HARQ} constraints,'' \emph{IEEE Trans. Mobile
		Comput.}, vol.~13, no.~2, pp. 422--433, Feb. 2014.
	
	\bibitem{Szczecinski-2013}
	L.~Szczecinski, S.~R. Khosravirad, P.~Duhamel, and M.~Rahman, ``Rate allocation
	and adaptation for incremental redundancy truncated {HARQ},'' \emph{IEEE
		Trans. Commu.}, vol.~61, no.~6, pp. 2580--2590, Jun. 2013.
	
	\bibitem{ding-jsac-2017}
	Z.~Ding, X.~Lei, G.~K. Karagiannidis, R.~Schober, J.~Yuan, and V.~K. Bhargava,
	``A survey on non-orthogonal multiple access for {5G} networks: Research
	challenges and future trends,'' \emph{IEEE J. Sel. Areas Commu.}, vol.~35,
	no.~10, pp. 2181--2195, Oct. 2017.
	
	\bibitem{XU-TSP-2017}
	Y.~Xu, C.~Shen, Z.~Ding, X.~Sun, S.~Yan, G.~Zhu, and Z.~Zhong, ``Joint
	beamforming and power splitting control in downlink cooperative {SWIPT NOMA}
	systems,'' \emph{IEEE Trans. Signal Process.}, vol.~15, no.~18, pp.
	4874--4886, Sep. 2017.
	
	\bibitem{Ding2014}
	Z.~Ding, Z.~Yang, P.~Fan, and H.~V. Poor, ``On the performance of
	non-orthogonal multiple access in {5G} systems with randomly deployed
	users,'' \emph{IEEE Signal Process. Lett.}, vol.~21, no.~12, Jul. 2014.
	
	\bibitem{tse-2005}
	D.~Tse and P.~Viswanath, \emph{Fundamentals of Wireless Communication}.\hskip
	1em plus 0.5em minus 0.4em\relax UK: Cambridge Univ. Press, 2005.
	
	\bibitem{cui-spl-2016}
	J.~Cui, Z.~Ding, and P.~Fan, ``A novel power allocation scheme under outage
	constraints in {NOMA} systems,'' \emph{IEEE Signal Process. Lett.}, vol.~23,
	no.~9, pp. 1226--1230, Sept. 2016.
	
	\bibitem{xu-globecom-2018-2}
	Y.~Xu, D.~Cai, Z.~Ding, C.~Shen, and G.~Zhu, ``Average power minimization for
	downlink {NOMA} transmission with partial {HARQ},'' \emph{in proc. IEEE
		GLOBECOM Workshop}, 2018, Abu Dhabi, UAE.
	
	\bibitem{li2015}
	A.~Li, A.~Benjebbour, X.~Chen, H.~Jiang, and H.~Kayama, ``Investigation on
	hybrid automatic repeat request ({HARQ}) design for {NOMA} with
	{SU}-{MIMO},'' in \emph{Proc. PIMRC}, Aug. 2015, pp. 590--594.
	
	\bibitem{choi-tcom-2016}
	J.~Choi, ``On {HARQ-IR} for downlink {NOMA} systems,'' \emph{IEEE Trans.
		Commun.}, vol.~64, no.~8, pp. 3576--3584, Aug. 2016.
	
	\bibitem{cai2018}
	D.~Cai, Y.~Xu, Z.~Ding, P.~Fan, and F.~Fang, ``On the impact of time-correlated
	fading for downlink {NOMA},'' \emph{IEEE Trans. Commu.}, vol.~67, no.~6, pp.
	4491--4504, Jun. 2019.
	
	\bibitem{cai2018-harq}
	D.~Cai, Z.~Ding, P.~Fan, and Z.~Yang, ``On the performance of {NOMA} with
	hybrid {ARQ},'' \emph{IEEE Trans. Veh. Tech.}, vol.~67, no.~10, pp.
	10\,033--10\,038, Jun. 2018.
	
	\bibitem{cai-globecom-2018}
	D.~Cai, Y.~Xu, F.~Fang, Y.~Shi, and P.~Fan, ``Outage probability of noma with
	partial {HARQ} over time-correlated fading channels,'' \emph{in proc. IEEE
		GlobeCom.}, 2018, Abu Dhabi, UAE.
	
	\bibitem{Andrews-2014}
	J.~Andrews, S.~Buzzi, C.~Wan, S.~Hanly, A.~Lozano, A.~Soong, and J.~Zhang,
	``What will 5{G} be?'' \emph{IEEE J. Sel. Areas Commun.}, vol.~32, no.~6, pp.
	1065--1082, Jun. 2014.
	
	\bibitem{JSAC2017-XU}
	Y.~Xu, C.~Shen, T.-H. Chang, S.-C. Lin, Y.~Zhao, and G.~Zhu, ``On
	energy-efficient {NOMA} designs for heterogeneous low-latency downlink
	transmissions,'' available: https://arxiv.org/pdf/1806.09144.pdf, 2018.
	
	\bibitem{abate-laplace}
	J.~Abate and W.~Whitt, ``A unified framework for numerically inverting
	{L}aplace transforms,'' \emph{INFORMS J. Comput.}, vol.~18, no.~4, pp.
	408--421, 2006.
	
	\bibitem{cvx}
	M.~Grant and S.~Boyd, ``{CVX}: Matlab software for disciplined convex
	programming, version 2.1,'' \url{http://cvxr.com/cvx}, Mar. 2014.
	
	\bibitem{Li-2013-tsp}
	W.-C. Li, T.-H. Chang, C.~Lin, and C.-Y. Chi, ``Coordinated beamforming for
	multiuser {MISO} interference channel under rate outage constraints,''
	\emph{IEEE Trans. Signal Process.}, vol.~61, no.~5, pp. 1087--1103, Mar.
	2013.
	
	\bibitem{Boyd-2009}
	S.~Boyd and L.~Vandenberghe, \emph{Convex Optimization}.\hskip 1em plus 0.5em
	minus 0.4em\relax Cambridge Univ. Press, 2009.
	
	\bibitem{Liu2014}
	Y.-F. Liu and Y.-H. Dai, ``On the complexity of joint subcarrier and power
	allocation for multi-user {OFDMA} systems,'' \emph{IEEE Trans. Signal
		Process.}, vol.~62, no.~3, pp. 583--596, 2014.
	
	\bibitem{Di2016}
	B.~Di, L.~Song, and Y.~Li, ``Sub-channel assignment, power allocation, and user
	scheduling for non-orthogonal multiple access networks,'' \emph{IEEE Trans.
		Wireless Commu.}, vol.~15, no.~11, pp. 7686--7698, 2016.
	
	\bibitem{liang2017}
	W.~Liang, Z.~Ding, Y.~Li, and L.~Song, ``User pairing for downlink
	non-orthogonal multiple access networks using matching algorithm,''
	\emph{IEEE Trans. Commu.}, vol.~65, no.~12, pp. 5319--5332, Dec. 2017.
	
	\bibitem{roth_sotomayor_1990}
	A.~E. Roth and M.~A.~O. Sotomayor, \emph{Two-Sided Matching: A Study in
		Game-Theoretic Modeling and Analysis}, ser. Econometric Society
	Monographs.\hskip 1em plus 0.5em minus 0.4em\relax Cambridge University
	Press, 1990.
	
\end{thebibliography}
\end{document}